\newcommand{\blind}{1}
\newtheorem{lemma}{Lemma}
\newtheorem{corollary}{Corollary}
\newtheorem{theorem}{Theorem}
\newtheorem{condition}{Condition}
\newtheorem{assumption}{Assumption}
\newcommand{\R}{\mathbb{R}}
\newcommand{\E}{\mathbb{E}}
\newcommand{\bP}{\mathbb{P}}
\newcommand{\cM}{\mathcal{M}}
\newcommand{\bM}{\mathbb{M}}
\newcommand{\Id}{\mathrm{Id}}
\newcommand{\pvalue}{\mathfrak{p}_{\rm value}}
\DeclareMathOperator*{\argmin}{arg\,min}
\newcommand{\iid}{\overset{\mathrm{iid}}{\sim}}
\definecolor{stevecolor}{rgb}{0.86,0.08,0.24}
\definecolor{yuancolor}{rgb}{0,0.37,0.2}
\definecolor{yuancolor2}{rgb}{0,0.18,0.65}
\begin{document}

\def\spacingset#1{\renewcommand{\baselinestretch}%
{#1}\small\normalsize} \spacingset{1}


\if1\blind
{
  \title{\LARGE\bf Robust Distribution-Free Tests for the Linear Model}
  \author{Torey Hilbert, Steven MacEachern, Yuan Zhang\thanks{
        The authors gratefully acknowledge the NSF under grant numbers SES-1921523, DMS-2413823, and DMS-2311109
    }
    \hspace{.2cm}\\
    Department of Statistics, Ohio State University
    }
    \date{}
  \maketitle
} \fi

\if0\blind
{
  \bigskip
  \bigskip
  \bigskip
  \begin{center}
    {\LARGE\bf Robust Distribution-Free Tests for the Linear Model}
\end{center}
  \medskip
} \fi

\bigskip
\begin{abstract}
Recently, there has been growing concern about heavy-tailed and skewed noise in biological data. We introduce RobustPALMRT, a flexible permutation framework for testing the association of a covariate of interest adjusted for control covariates. RobustPALMRT controls type I error rate for finite-samples, even in the presence of heavy-tailed or skewed noise. The new framework expands the scope of state-of-the-art tests in three directions. First, our method applies to robust and quantile regressions, even with the necessary hyper-parameter tuning. Second, by separating model-fitting and model-evaluation, we discover that performance improves when using a robust loss function in the model-evaluation step, regardless of how the model is fit. Third, we allow fitting multiple models to detect specialized features of interest in a distribution. To demonstrate this, we introduce DispersionPALRMT, which tests for differences in dispersion between treatment and control groups. We establish theoretical guarantees, identify settings where our method has greater power than existing methods, and analyze existing immunological data on Long-COVID patients. Using RobustPALMRT, we unveil novel differences between Long-COVID patients and others even in the presence of highly skewed noise.
\end{abstract}

\noindent%
{\it Keywords:}  permutation test, partial correlation, robust regression, quantile regression

\vfill
\newpage
\spacingset{1.9} 
\section{Introduction: Long-COVID Immunological Features}\label{sec:intro-MY-LC}

Biological experiments producing high-throughput data yield massive data sets, but the data are often unclean: heavy-tailed, skewed, correlated, and generally irregularly distributed. These shortcomings of the data challenge type I error control in standard statistical analyses \citep{wang2015high, eklund2016cluster, sun2020adaptive}. Perhaps most dramatically, \citet{hawinkel2019broken} found that even commonly used nonparametric methods can suffer inflated false discovery rates in differential abundance testing. \citet{guan2023conformal} gives explicit examples of standard permutation tests failing to control type I error.

The immune profiles of Long-COVID patients from the Mount Sinai-Yale study (\emph{MY-LC} hereafter) \citet{klein2023distinguishing} show many of these shortcomings.  In \emph{MY-LC}, the proportions of immune cells of various types, $Y_i$, are compared between patients experiencing Long-COVID (LC) and healthy patients. The ``types'' of cells considered varies from very broad, such as the proportion of live cells that are natural killer cells, to very specific, such as the proportion of CD8+ T-cells that express IL6.  The proportions also vary with general demographic features such as patient age and BMI, and we aim to investigate differences between healthy and LC patients after adjusting for these demographics.
Following \citet{klein2023distinguishing} and \citet{guan2023conformal}, we write the model
\begin{align}\label{eq:MY-LC}
    Y_i = \beta I\{\text{LC}_i = 1\} + Z_i^T \theta + \epsilon_i,
\end{align}
where $Z_i^T = (1, \text{age}_i, \text{sex}_i, \text{BMI}_i, \text{age}_i\times\text{BMI}_i, \text{sex}_i \times \text{BMI}_i)$ is the vector of covariates for which we wish to adjust, $\beta$ and $\theta$ are parameters, and $\epsilon_i$ is a noise term.
We wish to test the null hypothesis that Long-COVID has no effect on the cell type proportion $Y$ after adjusting for the covariates $Z$.
More specifically, in model \eqref{eq:MY-LC}, we wish to test $H_0: \beta = 0$. We make no assumptions on $\theta$, and search for a test that is valid for any value of $\theta$.

Figure \ref{fig:MY-LC-plots}a shows a normal QQ plot from a fit of model \eqref{eq:MY-LC}, assuming $\epsilon_i$'s are i.i.d.\ normal, for the proportions $Y$ for one type of cell. We see that there is shows strong right-skewness, clearly violating the assumption that the $\epsilon_i$'s are normally distributed.
In figure \ref{fig:MY-LC-plots}b we plot the residuals for a fit from model \eqref{eq:MY-LC} for three different types of cells. We see that the residuals for control patients are mildly skewed, while the residuals for the Long-COVID patients are much more skewed and have greater dispersion. This suggests that Long-COVID impacts the shape and dispersion of the error distribution and that explicit modeling of these effects will be scientifically informative.

\citet{klein2023distinguishing} noted
the increase in variation in immune response in Long-COVID patients, but did not formally test for this increase.
We propose extending model \eqref{eq:MY-LC} to
\begin{align}\label{eq:MY-LC-flex}
    Y_i = Z_i^T \theta + f(LC_i, \epsilon_i).
\end{align}
In this extended model, we test $H_0: f \text{ depends on } LC_i$.
Our RobustPALMRT framework allows us to test for this difference in variation after adjusting for background demographic covariates while controlling the type I error rate, which we term DispersionPALRMT.

\begin{figure}[!ht]
    \centering
    \includegraphics[width=\linewidth]{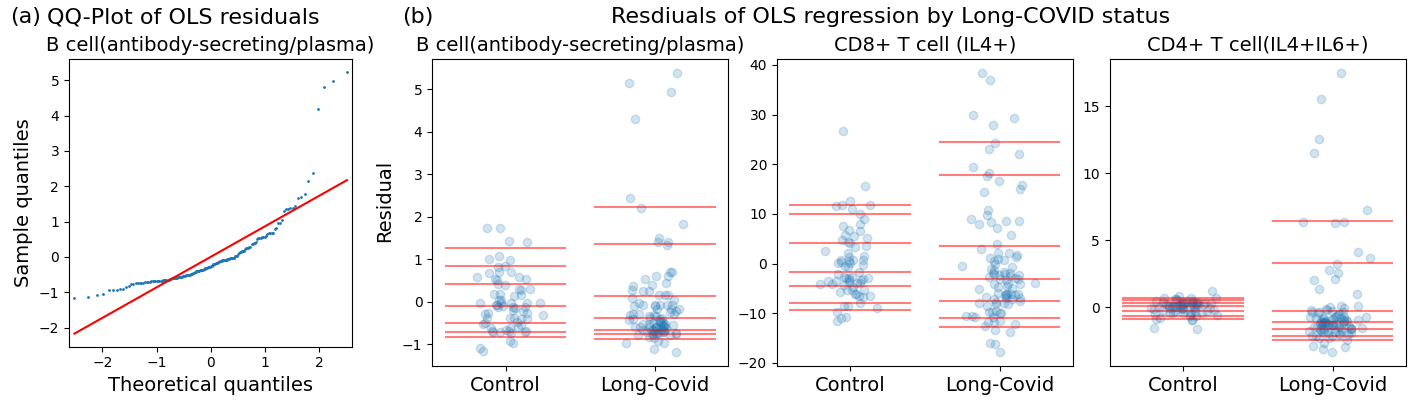}
    \vskip -0.25in
    \caption{\small{Residual analysis showing nonnormality and differences between Long-COVID and Control patient cell proportions under model~\eqref{eq:MY-LC} when fit via OLS. 
        Red lines in (b): $\{0.05, 0.10, 0.25, 0.50, 0.75, 0.90, 0.95\}$ quantiles.}
    }
    \label{fig:MY-LC-plots}
\end{figure}

\section{Prior work}

\subsection{Classical methods}

We consider a generalization of \eqref{eq:MY-LC} as the following linear regression problem
\begin{align}
    Y = X\beta + Z\theta + \epsilon,
    \label{regression-setup}
\end{align}
where $Y\in\mathbb{R}^{n\times 1}$ is the \emph{response}, 
$X \in \R^{n \times d}$ and $Z \in \R^{n\times p}$ are \emph{fixed covariates},
and $\epsilon\in\R^{n\times 1}$ is an error term whose entries are exchangeable.
Under model \eqref{regression-setup}, we aim to test $H_0: \beta = 0$.
Here we make no assumptions on $\theta$, and aim for tests that are valid for every $\theta$.

Many procedures exist to test $H_0$.
Until recently, these tests split into two groups.
The first group imposes strong parametric assumptions on $\epsilon$, such as  $\epsilon \sim \mbox{N}(0,\sigma^2 I)$ (e.g., \citet{fisher1922goodness}). 
The second group appeals to asymptotic arguments and large sample sizes, typically through a central limit theorem for some estimator of $\beta$ in \eqref{regression-setup}.
Later, the distributional assumptions on $\epsilon$ are relaxed, and one might, for example, use robust regression or quantile regression techniques (e.g., \citet{ronchetti2009robust,koenker1978regression}).
\citet{lei2021assumption} describe the ``century-long effort'' to develop tests for the partial correlation $\beta$, and provide an extensive literature review.
\citet{guan2023conformal} summarizes and critiques the finite-sample performance of many proposed permutation-based methods designed to yield an approximate type I error rate under weaker-than-normality conditions on the distribution of $\epsilon$. In general, thick-tailed distributions and outlier points can pose a problem for these classical methods.

\subsection{Finite-sample distribution-free tests}

\citet{lei2021assumption} developed the first test for $H_0$ that controls the type I error rate in finite samples under \eqref{regression-setup}.
Their Cyclic Permutation Test (CPT) relies on rank-constraints on the design matrix to construct a set of vectors $\eta_0, \ldots, \eta_m$ such that the linear statistics $(y^\top \eta_0, \ldots, y^\top \eta_m)$ are distributionally invariant to cyclic permutations.
However, the method requires $n/p$ to be at least 19 to perform a test at level $\alpha = 0.05$. Furthermore, the method does not have much higher power than the F-test in heavy tailed settings.

Other methods have been developed, including \citet{wen2022residual, guo2023invariance,  guan2023conformal, young2024asymptotically, d2024robust, spector2024mosaic} and \citet{pouliot2024exact}.
Many of these methods are tailored for specific variants of \eqref{regression-setup}.  For example, \citet{spector2024mosaic} aims to test the goodness of fit of a very specific model form and permits finite-sample heteroskedasticity according to that structure. 
\citet{guo2023invariance, young2024asymptotically} and \citet{d2024robust} 
all involve carefully constructed invariant substructures, similar to \citet{lei2021assumption}. While these methods are useful in testing for partial correlation, our focus is on moving beyond simple partial correlation to pick up (and understand) patterns such as those appearing in the \emph{MY-LC} data.

\citet{guan2023conformal} introduced a strikingly novel method called the \emph{Permutation Augmented Linear Regression Test (PALMRT)}.  
Previous permutation tests for the partial correlation assign a score to each permutation of the data and then compare the score for the observed data to this reference set.  
A typical score is the residual sum of squares from an OLS fit, perhaps accounting for the leverage of individual cases.  In contrast, PALMRT considers scores from pairs of permutations, establishes exchangeability of the rows of a certain matrix, and uses a technique from multisplit conformal prediction methods \citep{vovk2018cross, barber2021predictive, gupta2022nested} to ensure type I error control based on the comparison of an observed score to a set of permuted scores.  \citet{guan2023conformal} showed that PALMRT has up to $20$ times higher empirical power than CPT in some settings, and nearly the same power as the F-test in many settings.  

To the best of our knowledge, most existing works aim to match the F-test while adequately controlling for type I error in the difficult settings. However, if the analyst anticipates that the covariate and error distributions are thick-tailed, it is important to ensure that the method has high power in precisely those settings. Our work closes this gap and designs tests that have high power when the error distributions are thick-tailed or skewed.


\subsection{Our contributions}

We develop RobustPALMRT, a highly flexible framework that expands the scope of PALMRT well beyond least squares.  The extensions are designed with the motivating data example in mind (viz.\ the \emph{MY-LC} study).  They focus on settings where the error distribution may be far from normal -- precisely those settings where traditional tests have severely inflated type I error rates.  We make several specific contributions.  

First is expansion of the fitting procedure to any \emph{shift invariant} method, that is, any method that returns the same result for both $Y$ and $Y + Z\theta$ for any $\theta \in \R^p$.  These methods include robust regression, quantile regression, and many additional techniques.  Unlike least squares and quantile regression, most of these methods require specification (or estimation) of a tuning parameter, typically tied to the scale of the error distribution.  Estimation of the scale parameter must be handled carefully, as standard methods destroy type I error control.  Our framework allows for the estimation of the scale parameter in a fashion that maintains control of the error rate.  To our knowledge, this is the first test for robust regression parameters with scale estimation that is valid in a distribution-free setting without moment assumptions in finite-samples.  We illustrate the utility of this with a robust test for differences in center in the MY-LC study.

Second, we separate model fit from model evaluation. This allows the analyst to tailor their measure of fit to the alternative.  This allows one to fit the model efficiently under a solid set of assumptions while basing the test on a more robust summary of the residuals.  We illustrate the value of this split in a simulation study, where we find that the choice of the evaluation measure has a strong effect on power.  

Third, we enable the analyst to fit multiple models and combine the results for a single test.  This gives the analyst access to specific features of the residual distribution.  We illustrate this by developing \emph{DispersionPALRMT}, where we fit multiple quantile regression models to test for a difference in dispersion 
in the \emph{MY-LC} study.

\section{The General Framework}

To test the null hypothesis that $X$ has no effect on $Y$ after adjusting for $Z$, it suffices to test the adequacy of the following model:
\begin{align}\label{math:general-null}
    Y &\overset{H_0}{=} Z\theta + \epsilon, \qquad
    \epsilon_\pi \overset{d}{=} \epsilon \text{ for all $\pi \in S_n$},
\end{align}
where $S_n$ is the set of permutations of the integers $1, \ldots, n$, and where the notation $A_\pi$ denotes permuting the rows of matrix $A$ by permutation $\pi \in S_n$.  
Since the null model does not include $X$, if $X$ has an effect on $Y$ then $\epsilon$ will depend on $X$. Thus we wish to create a test that is sensitive to dependence between $X$ and $\epsilon$.

We briefly recall the core idea of PALMRT. One considers the pair of \emph{augmented} linear models $Y \sim X + Z + Z_\pi$ and $Y \sim X_\pi + Z + Z_\pi$, where the $Z_\pi$ augmentation allows for fair comparison of the effect of $X$ and $X_\pi$ on $Y$. 

Our RobustPALMRT framework has two components: a model-fitting algorithm $\cM$, and a model-evaluation procedure $\omega$.
The algorithm $\cM(Y, X, [Z, Z_\pi])$ produces a list of summary statistics describing how well the augmented model
\begin{align}
    Y = Z\theta + Z_\pi \theta^\prime + f(X, \epsilon)
\end{align}
fits the data, where $f$ is a function applied row wise to the $(X_i, \epsilon_i)$ pairs.
For example, if we choose to regress $Y \sim X + Z + Z_\pi$, the algorithm might output 
the vector of residuals $\cM(Y, X, [Z, Z_\pi]) = \mathbf{r} = Y - \hat{Y}$, the $L_2$ norm of the residual vector, or some other quantity of interest.
We then define the model evaluation function, $\omega : \bM \to \R$, with the convention that models that fit the data better have a smaller value of $\omega(M)$.
For an OLS model, $\omega(M)$ could be the sum of squared residuals.
We compute:
\begin{align} \label{M-pi}
    M^{\pi}_{\text{Orig}} =&~ \cM(Y, X, [Z, Z_{\pi}]), 
    \quad\textrm{and}\quad
    M^{\pi}_{\text{Perm}} = \cM(Y, X_{\pi}, [Z, Z_{\pi}]).
\end{align}
Intuitively, if $X$ has explanatory power for $Y$ after adjustment for $Z$, then for ``most'' permutations $\pi$, $M^{\pi}_{\text{Orig}}$ should be a better fit than $M^{\pi}_{\text{Perm}}$ \citep{guan2023conformal}.  To complete the test, select permutations, $\pi_1, \ldots, \pi_B \iid \text{Unif}(S_n)$, and compute:
\begin{align}\label{p-val}
    \pvalue &= \frac{1 + \sum_{b = 1}^B I\left[\omega\left(M^{\pi_b}_{\text{Orig}}\right) \geq \omega\left(M^{\pi_b}_{\text{Perm}} \right)\right]}{1 + B}.
\end{align}

\subsection{The core theorem}

Here we present the core theorem that underlies RobustPALMRT.
\begin{assumption}\label{assumption:exchangeable}
    $\epsilon$ is an exchangeable random vector.
\end{assumption}
\begin{assumption}\label{assumption:linearity}
    $Y$ is linear in $Z$, that is $Y = Z\theta + f(X, \epsilon)$.
\end{assumption}

In addition to these two assumptions, we require two conditions on the model fitting method.
Note that these conditions are explicitly verifiable.  The analyst can choose a procedure for which the conditions hold.   
\begin{condition}\label{cond:Z-shift}
    $\cM(Y + [Z, Z_{\pi}]\gamma, X, [Z, Z_{\pi}]) = \cM(Y, X, [Z, Z_{\pi}])$ for every $\gamma \in \R^{2p}$.
\end{condition}
\begin{condition}\label{cond:symm}
    $\cM(Y_\sigma, X_\sigma, [Z, Z_{\pi}]_\sigma) = \cM(Y, X, [Z, Z_{\pi}])$ for every permutation $\sigma \in S_n$.
\end{condition}
Condition \ref{cond:Z-shift} demands that the fitting algorithm depends only on the portion of $Y$ that is perpendicular to the $[Z, Z_\pi]$ subspace. For example, the residuals from any projection of $Y$ onto a subspace containing the columns of $[Z, Z_\pi]$ would satisfy this condition, along with many classical regression methods based on M-estimators.
Condition \ref{cond:symm} simply requires that the algorithm treats all cases symmetrically.


Notice that in assumption \ref{assumption:exchangeable} we do not require $\epsilon$ to be symmetric, which is critical because the MY-LC data exhibits high skewness.
We use assumption \ref{assumption:linearity} in conjunction with condition \ref{cond:Z-shift} to remove the effect of $Z$ by demanding that our fitting procedure be
invariant to a shift of $Y$ by $Z\theta$ for any $\theta \in \R^p$. 

\begin{theorem}\label{thm:validity}
    Suppose that $\cM$ satisfies conditions \ref{cond:Z-shift} and \ref{cond:symm}, and $Y$ satisfies assumptions \ref{assumption:exchangeable} and \ref{assumption:linearity}. Then for any $\alpha \in [0, 1]$ and $\theta \in \R^p$, we have
    \begin{align*}
        &\bP_{H_0}
        \left(\pvalue \leq \alpha \right) \leq 2\alpha.
    \end{align*}  
\end{theorem}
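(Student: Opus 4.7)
The plan is to (i) establish, under $H_0$, a distributional swap symmetry on the pair $(\omega(M^\pi_{\mathrm{Orig}}), \omega(M^\pi_{\mathrm{Perm}}))$ when $\pi \sim \mathrm{Unif}(S_n)$, and then (ii) convert that symmetry into a tail bound on $\pvalue$. The overall strategy follows the template of \citet{guan2023conformal} but is adapted to the more general $(\cM, \omega)$ framework.

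I would first remove the nuisance $\theta$. Under $H_0$ and Assumption~\ref{assumption:linearity}, $Y = Z\theta + \epsilon$, and since $Z\theta$ lies in the column span of $[Z, Z_\pi]$, Condition~\ref{cond:Z-shift} gives $M^\pi_{\mathrm{Orig}} = \cM(\epsilon, X, [Z, Z_\pi])$ and $M^\pi_{\mathrm{Perm}} = \cM(\epsilon, X_\pi, [Z, Z_\pi])$. Next I would apply Condition~\ref{cond:symm} with $\sigma = \pi^{-1}$ to both statistics, rewriting them in the swap-friendly form
\[
M^\pi_{\mathrm{Orig}} = \cM(\epsilon_{\pi^{-1}}, X_{\pi^{-1}}, [Z_{\pi^{-1}}, Z]), \qquad M^\pi_{\mathrm{Perm}} = \cM(\epsilon_{\pi^{-1}}, X, [Z_{\pi^{-1}}, Z]),
\]
in which swapping $X \leftrightarrow X_{\pi^{-1}}$ visibly interchanges the pair. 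Finally I would invoke Assumption~\ref{assumption:exchangeable}: because $\epsilon$ is exchangeable and drawn independently of $\pi$, $(\epsilon_{\pi^{-1}}, \pi) \stackrel{d}{=} (\epsilon, \pi)$. Changing variables $\sigma = \pi^{-1}$ (still uniform) and noting that $\omega\circ\cM$'s value depends on its third argument only through its column span (a mild property satisfied by all standard fitting procedures, and compatible with Condition~\ref{cond:Z-shift}), I would deduce
\[
(\omega(M^\pi_{\mathrm{Orig}}), \omega(M^\pi_{\mathrm{Perm}})) \stackrel{d}{=} (\omega(M^\pi_{\mathrm{Perm}}), \omega(M^\pi_{\mathrm{Orig}})),
\]
which yields $\bP(\omega(M^{\pi_b}_{\mathrm{Orig}}) < \omega(M^{\pi_b}_{\mathrm{Perm}})) \leq 1/2$ for every $b$.

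To finish, I would write the event $\{\pvalue \leq \alpha\}$ as $\{\sum_{b=1}^B S_b \geq (1-\alpha)(B+1)\}$ with $S_b = 1 - I\{\omega(M^{\pi_b}_{\mathrm{Orig}}) \geq \omega(M^{\pi_b}_{\mathrm{Perm}})\}$ and apply the multisplit-conformal bookkeeping of \citet{guan2023conformal}: the $S_b$ are (conditionally i.i.d.) Bernoulli indicators whose marginals are swap-symmetric, and, combined with the $1/(B+1)$ adjustments in the numerator and denominator of $\pvalue$, this yields the claimed $2\alpha$ bound. The main obstacle is the swap-symmetry derivation, in particular the bookkeeping around the change of variable $\sigma = \pi^{-1}$ and the column-order invariance of $\omega\circ\cM$ that lets $[Z_\sigma, Z]$ and $[Z, Z_\sigma]$ play interchangeable roles. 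A second delicate point is the final tail bound: a naive Markov inequality would only deliver $1/(2(1-\alpha))$, so the refined $2\alpha$ constant relies on the specific multisplit-conformal structure of the PALMRT p-value rather than on any generic inequality.
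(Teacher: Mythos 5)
Your first stage---using Assumption~\ref{assumption:linearity} with Condition~\ref{cond:Z-shift} to replace $Y$ by $\epsilon$, and Condition~\ref{cond:symm} with $\sigma=\pi^{-1}$ to put the two fits in a swap-symmetric form---is correct and matches the opening of the paper's argument. The genuine gap is in your second stage. The marginal swap symmetry $(\omega(M^\pi_{\mathrm{Orig}}),\omega(M^\pi_{\mathrm{Perm}}))\overset{d}{=}(\omega(M^\pi_{\mathrm{Perm}}),\omega(M^\pi_{\mathrm{Orig}}))$ only tells you that the conditional success probability $q(\epsilon)=\bP_\pi\bigl(\omega(M^\pi_{\mathrm{Orig}})\geq\omega(M^\pi_{\mathrm{Perm}})\mid\epsilon\bigr)$ has expectation at least $1/2$; since the $A_b$ all share the same $\epsilon$, the p-value concentrates on $q(\epsilon)$ as $B\to\infty$, and bounding $\bP(q(\epsilon)\leq\alpha)$ from $\E[q]\geq 1/2$ alone gives exactly the $\tfrac{1}{2(1-\alpha)}$ Markov bound you mention---not $2\alpha$. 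You correctly flag that the refined constant must come from ``the specific multisplit-conformal structure,'' but you never supply that structure, and it is the entire content of the theorem.

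What is missing is the following. The paper defines a comparison array over \emph{all ordered pairs} of permutations, $T(\pi_1,\pi_2;\epsilon)=\omega\bigl(\cM(\epsilon,X_{\pi_2},[Z_{\pi_1},Z_{\pi_2}])\bigr)$, so that $A(\pi,\tau)=I\bigl(T(\pi,\tau;\epsilon)\geq T(\tau,\pi;\epsilon)\bigr)$ satisfies the anti-symmetry $A(\pi,\tau)+A(\tau,\pi)=1$ (up to a tie convention). A deterministic tournament-type lemma (Landau; see also the conformal literature the paper cites) then shows that for \emph{any} such array at most a $2\alpha$-fraction (in $F$-weight) of columns $\tau$ can have weighted column average $\sum_\pi f(\pi)A(\pi,\tau)\leq\alpha$. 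Finally, the symmetry property $T(\pi_1,\pi_2;\epsilon_\sigma)=T(\sigma^{-1}\circ\pi_1,\sigma^{-1}\circ\pi_2;\epsilon)$ together with exchangeability of $\epsilon$ shows that the observable identity column is distributed like a uniformly random column, so the probability that \emph{its} average is $\leq\alpha$ is at most $2\alpha$; a separate exchangeability argument over the $B+1$ permutations $(\sigma,\pi_1,\dots,\pi_B)$ transfers this to the finite-$B$ Monte-Carlo p-value. Your pairwise symmetry only compares columns $\pi$ and $\mathrm{Id}$; without the full array, the anti-symmetry across arbitrary pairs, and the column-counting lemma, the $2\alpha$ bound does not follow. (Your added hypothesis that $\omega\circ\cM$ depends on the third argument only through its column span is harmless---the paper tacitly uses the same fact when identifying $[Z,Z_\pi]$ with $[Z_\pi,Z]$---but it is not the issue.)
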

This theorem allows the analyst to strictly control the type I error rate by using a $pvalue$ cutoff of $\alpha / 2$ rather than $\alpha$.  However, the simulations in section \ref{sec:Simulations} suggest that for reasonable choices of $\cM$ the nominal type I error rate of $\alpha$ is attained without adjustment of the cutoff.  

\section{RobustPALMRT: Testing flexible hypotheses} \label{sec:robust-palmrt-tests}

\subsection{Finite-sample testing for M-estimators with scale estimation}\label{sec:palmrt-huber}

Robust regression via M-estimation is a standard method to ensure that an analysis is stable in the presence of skewed observations and outliers \citep{ronchetti2009robust, sun2020adaptive}.  To perform a robust regression, a criterion function, $\rho$, is chosen and the parameter estimates are found by empirical minimization of the criterion.  To perform a robust regression on the model (\ref{regression-setup}), compute $\big(\hat{\beta},\hat{\theta}\big) = \argmin_{\beta, \theta}  \sum_{i=1}^n \rho(r_i / \hat{s})$, where the vector of residuals is $\mathbf{r} = Y - X \beta - Z \theta$ and $\hat{s}$ is an estimated scale.  For a specific example, the criterion function may be Huber's linearization of quadratic loss, 
\begin{align*}
    \rho_{\text{Huber}}(t) = \begin{cases}
        \frac{t^2}{2} & |t| \leq \delta \\
        |t|\delta - \frac{\delta^2}{2} & |t| > \delta 
    \end{cases}
\end{align*}
where $\delta$ is traditionally set at $1.345$, with
scale estimation based on the median absolute deviation (MAD), computed by $\text{Median}(|r_i - \text{Median}(r)|)$.  

To create a test that fits within the RobustPALMRT framework, we need an algorithm $\cM$ that satisfies conditions \ref{cond:Z-shift} and \ref{cond:symm}.
The scale $\hat{s}$ is often concurrently estimated with $\big(\hat{\beta},\hat{\theta}\big)$ (e.g. \citet{ronchetti2009robust}, \citet{venables2013modern}) using an Iteratively Reweighted Least Squares (IRLS) procedure.
Informally, this algorithm proceeds as follows:
\begin{enumerate}
    \item Initialize a value $\big(\Tilde{\beta},\Tilde{\theta}\big)$ by OLS.
    \item Estimate the scale from the residuals in the model $\big(\Tilde{\beta},\Tilde{\theta}\big)$.
    \item Update $\big(\Tilde{\beta},\Tilde{\theta}\big)$ by weighted least squares with the fixed scale.
    \item Repeat steps 2 and 3 until convergence.
\end{enumerate}
In the supplemental materials algorithm A1, we provide the explicit algorithm implemented in the statistical software package R in the MASS library \citep{venables2013modern} and prove that it satisfies conditions \ref{cond:Z-shift} and \ref{cond:symm}.
To give a sketch of the proof, notice that every step in the algorithm is symmetric, so that condition \ref{cond:symm} is automatically satisfied. Then notice that after passing to the OLS residuals, every step depends on $Y$ only through the OLS residuals of $Y$ on $[X, Z, Z_\pi]$; since OLS satisfies the shift invariance condition, condition \ref{cond:Z-shift} is satisfied.

Our results show that one can choose $\cM$ to be the residuals of a Huber regression with MAD scale estimation and have type I error rate control for {\em any} comparison function $\omega$.  However, we are also concerned with power.  If the two regressions, $Y \sim X + Z + Z_\pi$ and $Y \sim X_\pi + Z + Z_\pi$, are fitted with separate scales, the resulting residuals are not comparable since they optimize different loss functions.
In our experience, it is more effective to estimate the scale from a preliminary regression, $Y \sim Z + Z_\pi$, and then use the same scale for both regressions.  The preliminary regression does not involve $X$ and is valid under $H_0$.  It tends to overestimate the scale under $H_a$, pushing more of the residuals towards the center region of the loss function.  This leads to a single scale estimate $\hat{s}$ from the preliminary regression and the order statistics of the residuals from a Huber regression with scale fixed at $\hat{s}$.  Denoting the ordered residuals as $\bf{r}$, the output of the algorithm is 
\[
\cM(Y, X, [Z, Z_\pi]) = (\hat{s}, \bf{r})
\]

\subsection{Model evaluation}

Model evaluation is distinct from model fitting.
In its basic form, RobustPALMRT compares many pairs of fitted models, as in \ref{M-pi}, identifying the member of each pair that fits the data better. When the summary of a model is a single vector of residuals, our focus has been on the norm of the residuals, and we might naturally consider $\omega(M) = \|\mathbf{r}\|$ for some norm $\|\cdot\|$.
PALMRT uses the $L_2$ norm for both model fitting (OLS) and model evaluation.
In keeping with our desire for robustness, we use the Huber loss\footnote{Notice that $\|\cdot\|_{\rm Huber}$ is \emph{not} a valid norm since it does not satisfy the triangle inequality. Nevertheless, we use notation reminiscent of a norm.}, 
$\|\mathbf{r}\|_{\text{Huber}} = \sum_{i = 1}^n \rho_{\text{Huber}}(r_i)$.

When using an estimated scale $\hat{s}$ from $\cM$ chosen as above, we should ensure that the models we are comparing are on the same scale before computing a Huber loss.
With the choice of $\cM$ as above, we use $\omega(M) = \left\| \mathbf{r} / \hat{s} \right\|_{\text{Huber}}$.
We describe our method precisely in the upcoming algorithm \ref{alg:mad-scale-full}.

\begin{algorithm}
\caption{Huber-Huber RobustPALMRT with MAD scaling}\label{alg:mad-scale-full}
\begin{algorithmic}[1]
    \Require Rectangular data set $[Y, X, Z]$. 
 $Y$ is the response vector; $X$ and $Z$ are matrices of covariates.  Tests for the effect of $X$, given the presence of $Z$ in the model.
        \For{$b = 1, \ldots, B$}
            \State Generate $\pi_b \sim \text{Unif}(S_n)$.
            \State Estimate $\hat{s}$ by Huber regression $Y \sim [Z, Z_{\pi_b}]$ with MAD scaling.
            \State Compute $\mathbf{r}^{\pi_b}_{\text{Orig}}$ by Huber regression $Y \sim [X, Z, Z_{\pi_b}]$ with fixed scale $\hat{s}$.
            \State Compute $\mathbf{r}^{\pi_b}_{\text{Perm}}$ by Huber regression $Y \sim [X_{\pi_b}, Z, Z_{\pi_b}]$ with fixed scale $\hat{s}$.
            \State $A_b \gets I\Big(
                \|\mathbf{r}^{\pi_b}_{\text{Orig}}/\hat{s}\|_{\text{Huber}} \geq
                \|\mathbf{r}^{\pi_b}_{\text{Perm}}/\hat{s}\|_{\text{Huber}}
            \Big)$ 
        \EndFor
        \State \Return $\pvalue \gets \frac{1 + \sum_{b = 1}^B A_b}{B + 1}$
\end{algorithmic}
\end{algorithm}

Our method provides a test for a specified value of $\beta$ under $H_0$.  By inverting the family of such $\alpha$-level tests, we obtain a confidence interval for $\beta$ with a guaranteed coverage probability of at least $100(1 - 2\alpha)$\%. 
As a technical note, the minimizer of the empirical loss may not be unique, as with the median when the sample size is even.  This issue is handled by specifying a convention that preserves condition \ref{cond:Z-shift}. Standard software packages do this.


\subsection{Finite-sample testing for multiple quantile regression fits}\label{sec:palmrt-quantile-regression}

Using quantile regression \citep{koenker1978regression}, RobustPALMRT can target complex features of the response distribution.
For example, with a pair of quantile regressions, one can estimate the conditional inter quantile range (IQR) to assess the dispersion of the response rather than the location.

In the MY-LC study, we are interested in identifying whether Long-COVID patients have increased variability in their cell type proportions controlled for demographic features.
Let $X$ be a vector of indicators for Long-COVID status, let $Z$ be the matrix of demographic features, and let $Y$ be the proportions of interest. 
If we fit two quantile regressions, one for the $90^{th}$ percentile and another for $10^{th}$ percentile, we can estimate the 80\% conditional IQR.
This is a robust measure of the dispersion, and may also be more scientifically interpretable.
For example, it is easier to interpret the implications of the statement ``For 80\% of the patients, the percent of CD8+ T-cells expressing IL6 is between 0.3\% and 11.4\%'' than it is to interpret the implications of a standard deviation of 7.8\% when the population is highly skewed.

\begin{algorithm}
\caption{DispersionPALRMT testing for differences in IQR}\label{alg:scale-palmrt}
\begin{algorithmic}[1]
\Require Rectangular data set $[Y, X, Z]$. 
 $Y \in \R^n$ is a vector of responses; $X \in \R^n$ is a vector of case/control indicators; $Z \in \R^{n \times p}$ is a matrix of covariates; quantiles $q_{\text{Low}}, q_{\text{High}}$.
\State Select $\pi_1, \ldots, \pi_B \iid \text{Unif}(S_n)$.
\For{$b = \{ 1, \ldots, B \}$}
    \For{$\tau = \{ \Id, \pi_b \}$}
        \State $Q^{\tau} ~~\gets$ Residuals of quantile regression of $Y$ on $[X_\tau, Z, Z_\pi]$, $q = \{ q_{\text{Low}}, q_{\text{High}} \}$
        \For{$j = \{ 0, 1 \}$}
            \State $\hat{s}_{j}^{\tau} \gets \frac{1}{|X_\tau = j|} \sum_{i \in \{ X_\tau = j \}} |Q_i^{\tau, q_{\text{High}}} - Q_i^{\tau, q_{\text{Low}}}|$
        \EndFor
        \State $r^{\tau} \gets -|\log\left({\hat{s}_{1}^{\tau}}/{\hat{s}_{0}^{\tau}}\right)|$
    \EndFor
    \State $A_b \gets I(r^{\Id} \geq r^{\pi_b})$
\EndFor
\State \Return $\pvalue \gets \frac{1 + \sum_{b = 1}^B A_b}{B + 1}$
\end{algorithmic}
\end{algorithm}

We state the \emph{DispersionPALRMT} method more formally for any case/control dataset. Let $n_0$ be the number of controls, and let $n_1$ be the number of cases. Define $q_{\text{Low}} = 0.10, q_{\text{High}} = 0.90$. We fit four quantile regressions: $Y \sim X + Z + Z\pi$ and $Y \sim X_\pi + Z + Z\pi$ at quantiles $q_{\text{Low}}$ and $q_{\text{High}}$. Define $\mathbf{r}^{q, k}$ to be the ordered residuals from quantile $q$ for cases in group $k \in \{\text{Control}, \text{Case}\}$. Then let
\begin{align*}
    \cM(Y, X, [Z, Z_\pi]) &= \big(
        \mathbf{r}^{q_\text{Low}, \text{Control}},~
        \mathbf{r}^{q_\text{Low}, \text{Control}},~
        \mathbf{r}^{q_\text{High}, \text{Case}},~
        \mathbf{r}^{q_\text{High}, \text{Case}}
    \big)   \\
    \omega(M) &= 
        -\left|\log\left(
            \frac
            {\frac{1}{n_1} \sum_{i = 0}^{n_1}
                \Big| \mathbf{r}^{q_\text{High}, \text{Case}}_i - \mathbf{r}^{q_\text{Low}, \text{Case}}_i \Big|}
            {\frac{1}{n_0} \sum_{i = 0}^{n_0}
                \Big| \mathbf{r}^{q_\text{High}, \text{Control}}_i - \mathbf{r}^{q_\text{Low}, \text{Control}}_i \Big|}
        \right)\right|.
\end{align*}


\section{Simulation studies}\label{sec:Simulations}

\subsection{Location simulations}

In our first set of simulations, we evaluate algorithm \ref{alg:mad-scale-full} in model \eqref{regression-setup}. In this setting, the F-test is the standard choice when the noise is normally distributed.
However, when the noise is substantially skewed, the F-test can have inflated type I error rates as we will see. We compare RobustPALMRT, using several choices of model fitting and model evaluation procedures, to the both PALMRT and the F-test.

We simulate data from model \eqref{regression-setup} for various choices of $X$, $Z$, $\epsilon$, and $\beta$.
We generate $\epsilon$ using (i) normal, (ii) $t_3$, (iii) Cauchy, (iv) multinomial + normal, and (v) log-normal distributions, respectively.
The multinomial + normal model, (iv), was proposed by \citet{guan2023conformal} as a challenging case for traditional permutation methods.
There, $n-1$ entries of $\epsilon$ are i.i.d.\ $N(0, 1)$, while one randomly selected entry has distribution $N(\pm 10^4, 1)$, representing a severe outlier.
There are $p = 6$ covariates in $Z$, while $X$ is univariate.
The entries of $[X, Z]$ are i.i.d.\ $\text{Cauchy}(0, 1)$, except for one intercept column in $Z$.
We use $\alpha = 0.05$ and $B = 999$ permutations for both PALMRT and RobustPALMRT. Lastly, we both test settings where $\beta = 0$, evaluating type I error rates, and where $\beta$ is selected such that the F-test has a specified power, thus evaluating power.

We consider two model fitting strategies: OLS and Huber robust regression; and three evaluation metrics: $L_1$, $L_2$ and Huber loss with an estimated scale parameter. 
We label the methods by fitting method and evaluation metric:  PALMRT from \citet{guan2023conformal} is OLS-L2 while algorithm \ref{alg:mad-scale-full} is Huber-Huber.

\begin{figure}
    \centering
    \includegraphics[width=1\linewidth]{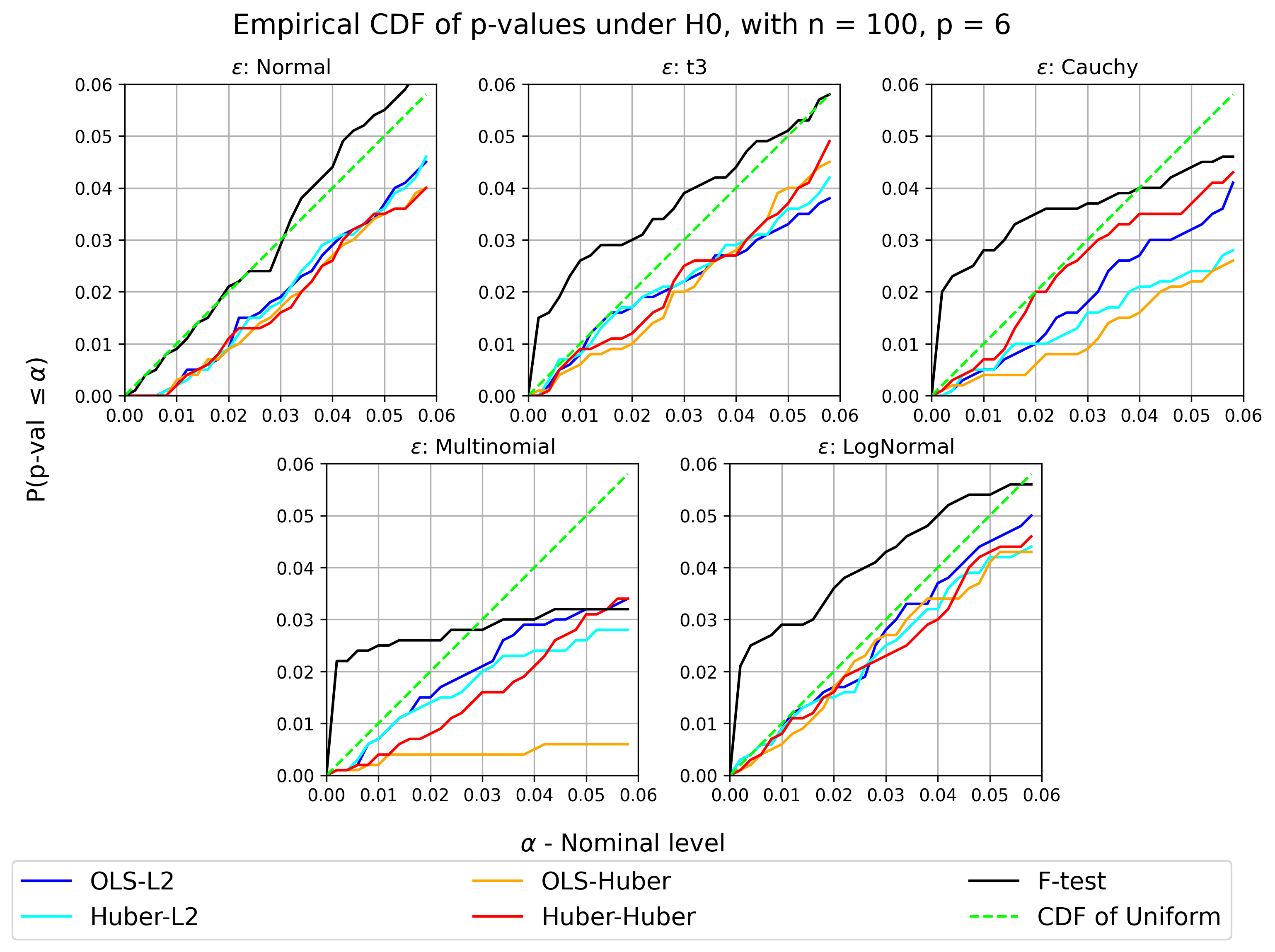}
    \vskip -0.25in
    \caption{
        \small{
        Empirical CDF of PALMRT and RobustPALMRT for p-values under $H_0$.
        If the p-values for a method are calibrated correctly, the simulated curve will lie near or below the dashed line, which is the cumulative distribution function (CDF) of the uniform distribution. Notice that for all nonnormal settings, the F-test spikes far above the uniform CDF for small nominal levels, while all RobustPALMRT methods stay below the uniform CDF.
        }
    }
    \label{fig:empirical_cdf}
\end{figure}

First we assess type I error with $\beta = 0$ with $n = 100$ cases. The F-test does well with the normal errors of setting (i).  It controls the type I error rate and is more powerful than the other methods, as expected.   
For the thick-tailed and skewed settings, the F-test has a greatly inflated type I error rate for small values of $\alpha$ and is far less powerful than the best of the other methods. See figure \ref{fig:empirical_cdf} for the empirical CDFs, and notice in particular that for very small nominal levels, say $\alpha < 0.01$, the F-test particularly struggles. Inflation of type I error at these low levels can negatively affect the false discovery rate control procedures commonly used with these types of datasets.

On the other hand, the type I error rate for both OLS-L2 (PALMRT) and Huber-Huber (RobustPALMRT) was found to be less than $\alpha$ in all tested settings.
Notably, theorem \ref{thm:validity} guarantees that both methods have type I error rate less than $2 \alpha$. The fact that these methods control type I error at the nominal level $\alpha$ in the simulations justifies our choice to refer to $\alpha$ as the nominal level despite the theory only guaranteeing control at level $2\alpha$. We are aware of extreme cases where the factor of $2$ is necessary, but the cases we know of require the use of model evaluation procedures that border on the absurd.

\begin{figure}
    \centering
    \includegraphics[width=1\linewidth]{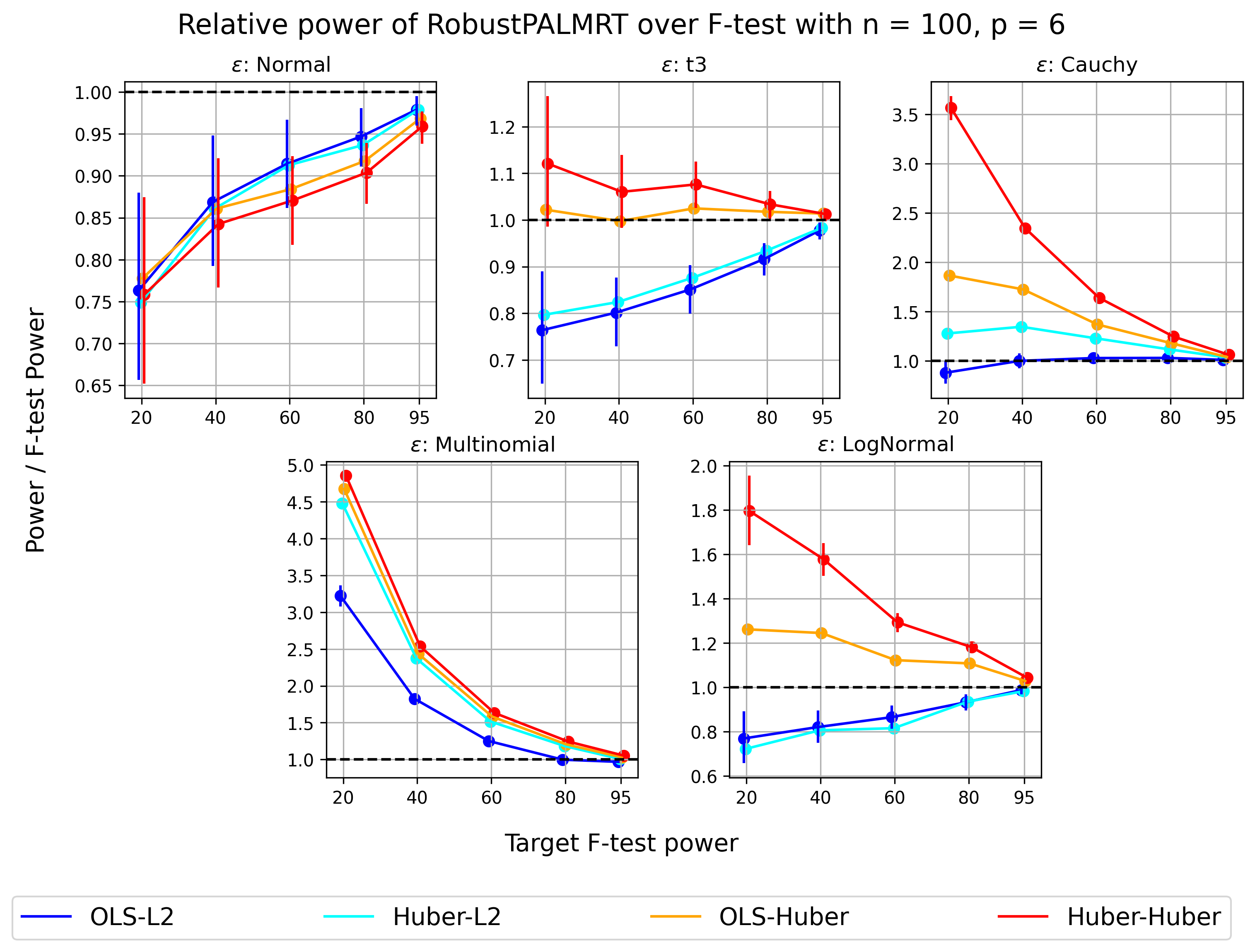}
    \vskip -0.25in
    \caption{
        \small{
        Relative power of RobustPALMRT to the the F-test. More specifically, (Power of RobustPALMRT) / (F-test power) vs F-test power.
        Error bars are provided for the OLS-L2 and Huber-Huber methods, but are very small.  The curves in these panels have been ``jittered'' horizontally to preserve their shapes while enhancing the visualization.  The nominal level of the F-test is $0.05$.
        Notice that for all nonnormal settings, the Huber-Huber method uniformly performed the best, followed by the OLS-Huber method.
        }
    }
    \label{fig:relative_sim}
\end{figure}

\begin{figure}
    \centering
    \includegraphics[width=1\linewidth]{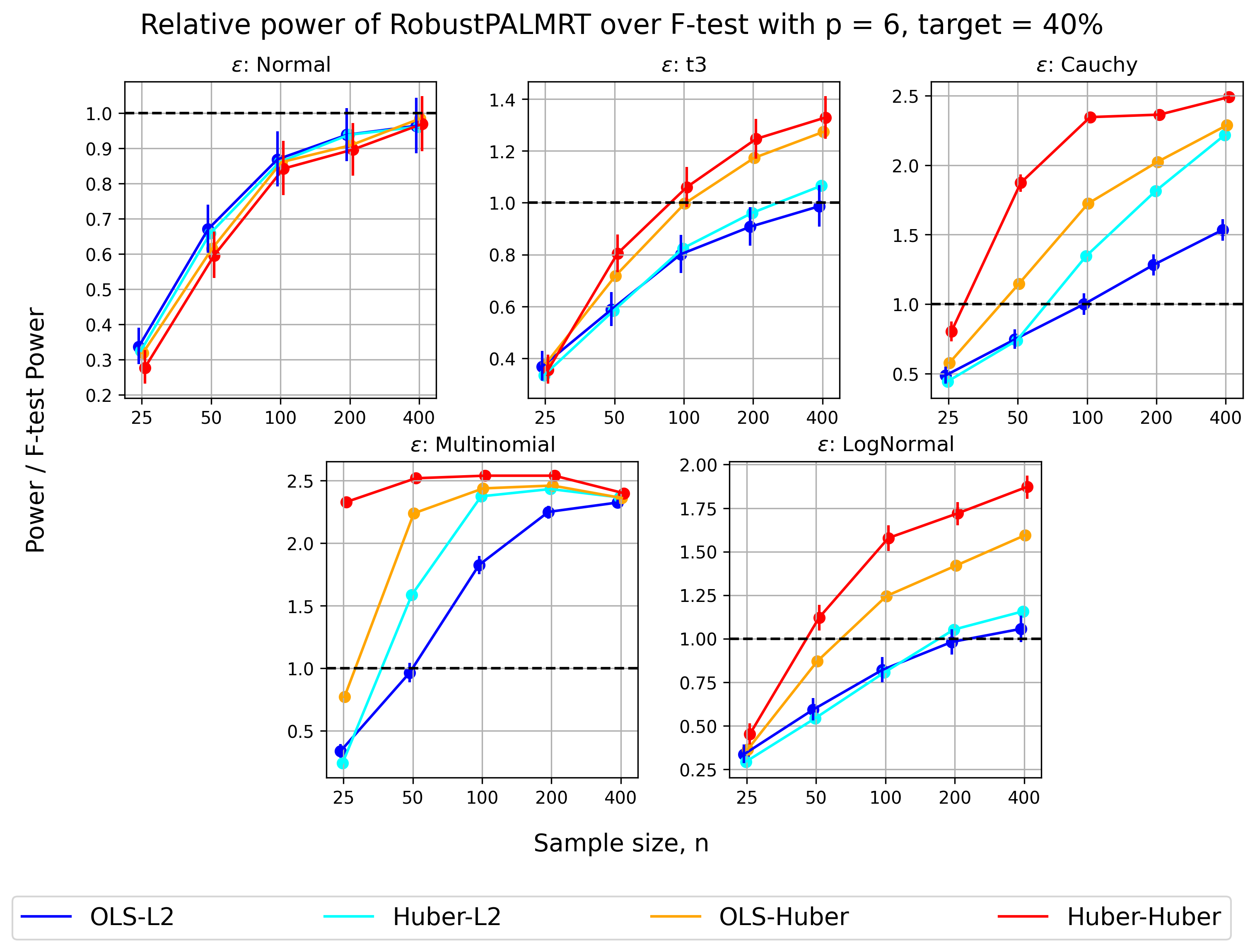}
    \vskip -0.25in
    \caption{
        \small{
            Relative power of RobustPALMRT to the the F-test. More specifically, (Power of RobustPALMRT) / (F-test power) vs sample size $n$, fixing F-test power at $40\%$.
            Error bars are provided for the OLS-L2 and Huber-Huber methods, but are very small.  The curves in these panels have been ``jittered'' horizontally to preserve their shapes while enhancing the visualization.  The nominal level of the F-test is $0.05$.
            Notice that for all nonnormal settings, the Huber-Huber method uniformly performed the best, followed by the OLS-Huber method.
        }
    }
    \label{fig:sample_size_sim}
\end{figure}

Next we assess the power under various alternatives.
We calibrate $\beta$ such that the F-test has an approximate power of $\{ 0.20, 0.40, 0.60, 0.80, 0.95 \}$.
In particular, we use a simulation-based root finding approach to select the $\beta$ in each of the $25$ settings such that the F-test will have the desired power. Calibrating $\beta$ in this way ensures that our tests cover a large range of power in each setting, and give a fair baseline for comparison with RobustPALMRT.

The OLS-L2 and Huber-Huber methods show similar performance when the errors are normal.
OLS-L2 has slightly greater power than Huber-Huber here. In figure \ref{fig:sample_size_sim}, we see that both methods converge toward the F-test's power as the sample size grows when the errors are normal.

For the thick-tailed and skewed error distributions of settings (ii) through (v), Huber-Huber has much more power than OLS-L2. In figure \ref{fig:relative_sim} we see that the Huber-Huber method uniformly outperformed all other methods, sometimes by a large margin. Interestingly, moving from OLS-L2 to OLS-Huber increases power substantially while preserving control of the type I error rate, suggesting that the model evaluation procedure plays a very important role.
This supports an emerging pattern in the literature on model choice where robust model evaluation leads to substantial improvement in power when the data is produced by a thick-tailed distribution (e.g., \citet{jung2021modified}).

Figure \ref{fig:sample_size_sim} presents results on sample sizes ranging from $n=25$ to $n=400$.  The patterns described above hold across this range of sample sizes.  We note that, for small sample size and nonnormal errors, the level of the F-test may be greatly inflated.  

PALMRT has been designed to provide a test with a (provable) bound on the finite-sample type I error rate when $\epsilon$ is not normally distributed.  It is in precisely these settings of nonnormal, but exchangeable, errors that RobustPALMRT outperforms PALMRT.  Our recommendation is to use Huber-Huber RobustPALMRT instead of PALMRT in these cases.  As a fallback position for those who wish to use OLS to fit the model, we recommend the use of a robust evaluation criterion (Huber loss) to increase the power of PALMRT while retaining the guaranteed type I error rate.  

\subsection{Dispersion simulations}

In our next set of simulations, we assess the performance of the quantile focused approach of section \ref{sec:palmrt-quantile-regression}, which we hereafter refer to as \emph{DispersionPALRMT}. We generate data from the following model:
\begin{align}\label{model:heteroskedasticity}
    Y_i &= Z_i^T \theta + (1 + \beta X_i) \epsilon_i,
\end{align}
where $\beta \in \R$ and $X \in \{0, 1\}^n$.
We generate $\epsilon$ according to (i) normal, (ii) Cauchy, (iii) log-normal distributions, respectively. We use $p - 1 = 5$ covariates in $Z$, with one intercept column and the rest drawn as $Z_{ij} \iid \text{Cauchy}$. To examine both type I error rate and power, we select $\beta \in \{ 0, 0.5, 1.0, 1.5, 2.0 \}$, and perform the hypothesis tests at level $\alpha = 0.05$.
Lastly, we select the sample size as $n \in \{100, 200, 400\}$.

\begin{figure}[htbp!]
    \centering
    \includegraphics[width=\linewidth]{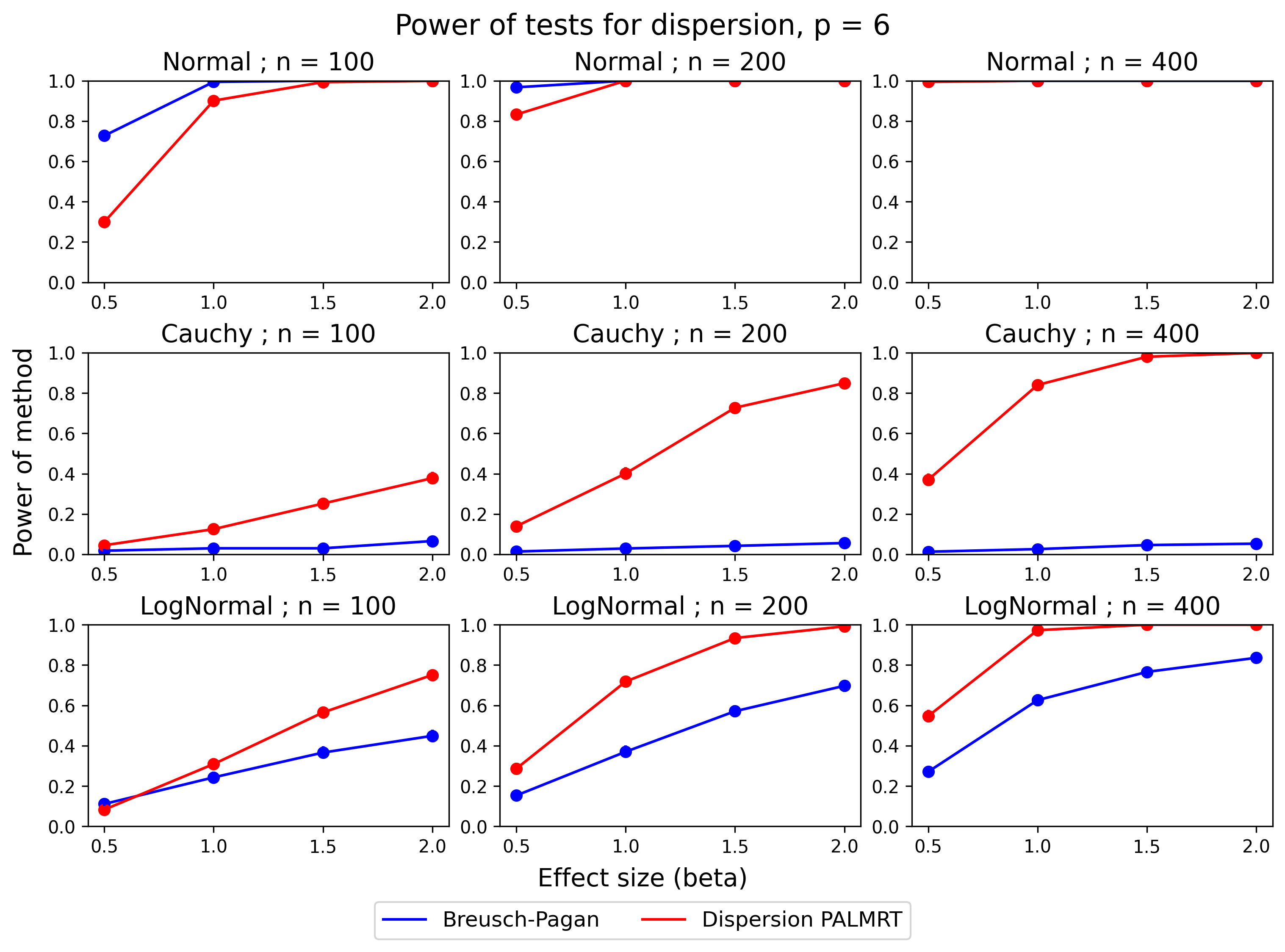}
    
    \caption{
        \small{Power of the Breusch-Pagan test and DispersionPALRMT in model \eqref{model:heteroskedasticity}. The power was computed by 1000 Monte-Carlo replicates at level $\alpha = 0.05$. Error bars are plotted, but are nearly invisible due to their negligible width. The title of each plot indicates the noise distribution and sample size. Notice that in the nonnormal settings, DispersionPALRMT (algorithm \ref{alg:scale-palmrt}) requires around $n = 200$ observations to detect a two-fold increase in standard deviation of the noise (i.e. $\beta = 1.0$).}
    }
    \label{fig:dispersion}
\end{figure}

We compare DispersionPALRMT to the Breusch-Pagan test \citep{breusch1979simple, cook1983diagnostics}, a classic test for heteroskedasticity such as that in model \eqref{model:heteroskedasticity}. For robustness in heavy tailed settings, we use the correction to the Breusch-Pagan test described in \citet{koenker1981note}, using the implementation in the statsmodels Python package \citep{seabold2010statsmodels}. This test rests on either normality assumptions or asymptotic theory, and so we expect that the test will perform well in the normal noise of setting (i) but suffer in the nonnormal noise of settings (ii) and (iii).

All methods controlled the type I error rate in all tested settings, so we omit plots of performance under the null. We note, however, that in all tested settings DispersionPALRMT was highly conservative. For example, it has a type I error rate of $0.004$ in the normal errors setting with $n = 100$. This suggests that there is room to recalibrate the test to the nominal level. The Breusch-Pagan test was similarly conservative in the Cauchy errors setting, likely owing to the studentization correction of \citet{koenker1981note} dominating the signal. 

The power curves for the two method are shown in figure \ref{fig:dispersion}. In the normal setting (i), we see that the Breusch-Pagan test outperforms the dispersion focused PALMRT method. This is expected, since purely quantile based methods are generally less efficient than least squares based approaches in normal settings. 

In the nonnormal settings (ii) and (iii), the quantile based approach of DispersionPALRMT is much more effective than the Breusch-Pagan test. We attribute this to the fact that the quantile regressions are unaffected by the extreme observations that the heavy tailed errors create.

\section{Case Study: Long-COVID Immunological Features}\label{sec:MY-LC}

We use the RobustPALMRT framework to analyze the MY-LC study Long-COVID dataset presented in section \ref{sec:intro-MY-LC}.
This dataset was also analyzed \citet{guan2023conformal} using PALMRT, and so we compare the conclusions of the analyses between OLS-L2 PALMRT and Huber-Huber RobustPALMRT with scale estimation.

Seven patients were excluded from the study for biological reasons \citep{klein2023distinguishing}, leaving $99$ Long-COVID patients and $77$ control patients.
We analyze the proportions of $65$ different cell types, with each type modeled separately. The proportion of each cell type will serve as a response variable, say $Y^{(k)}$, $k = 1, \ldots, 65$. For each response, we use model \ref{eq:MY-LC}.

\begin{figure}[htbp!]
    \centering
    \includegraphics[width=\linewidth]{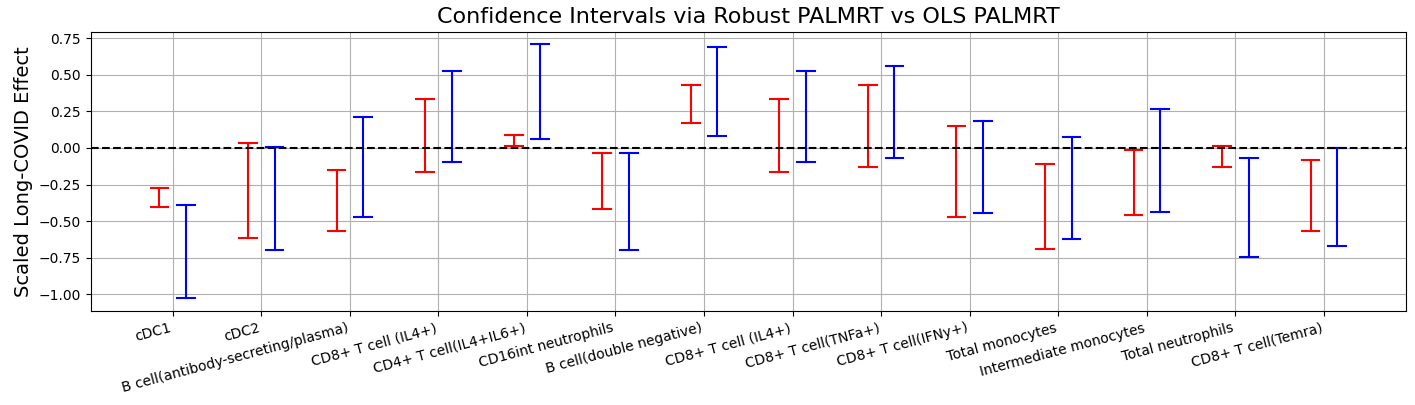}
    
    \caption{
        \small{95\% CI's for the effect of Long-COVID using Huber-Huber (red, left) and OLS-L2 (blue, right). The intervals are scaled by the standard deviation of the cell type proportion ($Y^{(k)})$.}
    }
    \label{fig:MY-LC-plots-CIs}
\end{figure}

Figure \ref{fig:MY-LC-plots-CIs} shows confidence intervals for the Long-COVID effect using OLS-L2 PALMRT and Huber-Huber RobustPALMRT for $14$ selected cell types.
Notice that for several cell types, one interval covers zero while the other does not, implying that the two tests result in different conclusions.
For many cell types the intervals are similar, with the RobustPALMRT interval tending to be a little shorter than the corresponding PALMRT interval.
However for some cell types there is a dramatic difference in interval width.
In these cases, we also find that the centers of the intervals are quite different.
We attribute these differences to the lack of robustness of OLS-L2 PALMRT.
The focus on the mean and variance that is implicit in the procedure produces a test (and interval) that is sensitive to extreme observations.
Huber-Huber RobustPALMRT reduces this sensitivity.  

Figure \ref{fig:MY-LC-plots} suggests that, for some cell types, the most evident impact of Long-COVID will be on the dispersion and skewness of cell type proportion.
We use the quantile regression approach described in section \ref{sec:palmrt-quantile-regression} to test for differences in the 80\% IQR between the Long-COVID and healthy control patients adjusted for control covariates.
Our dispersion-focused analysis found a significant effect of Long-COVID at the $\alpha = 0.05$ level for $15$ of the $65$ cell types.
The significant effects include those for which there was no apparent impact of Long-COVID on mean cell concentration; in particular for 3 cell types, the F-test did not find a significant difference in mean cell concentration while our dispersion-focused method found strong evidence (p-val $< 0.01$) of a difference in dispersion of cell concentration between the groups. Specifically, we found strong differences in the dispersion of the cell concentrations between Long-COVID and control patients in the following subpopulations: TNF$\alpha$-expressing T-cells among all CD8+ T-cells, IL4-expressing T-cells among all CD8+ T-cells, and IFN$\gamma$-expressing T-cells among all CD4+ T-cells.

\section{Proof Sketch of theorem 1}\label{sec:proof-thm-valid}

To prove the validity of theorem \ref{thm:validity}, we build off of the work of \citet{guan2023conformal}. We first give a sketch of that work, and then we show how our framework fits into that work. In the supplemental materials we give a full proof of an extended ``population version'' of the result that includes weights.

\citet{guan2023conformal} introduced the idea of forming an array of direct comparisons of permuted and unpermuted versions of the model rather than forming a ``one-dimensional'' list of summary statistics $T(\pi_1) \leq \cdots \leq T(\pi_B)$. Let $\pi_1, \pi_2 \in S_n$ be an arbitrarily selected pair of permutations. We construct an array $T(\pi_1, \pi_2; \epsilon)$ with a very specific symmetry property, namely that $T(\sigma \circ \pi_1, \sigma \circ \pi_2; \epsilon_\sigma) = T(\pi_1, \pi_2; \epsilon)$.

From here, we construct the $B \times B$ comparison matrix $A_{\pi_1, \pi_2} = I(T(\pi_1, \pi_2; \epsilon) \geq T(\pi_2, \pi_1; \epsilon))$. This matrix is stochastic, but appealing to a classic result from tournament theory \citep{landau1953dominance, barber2021predictive} one can deterministically show that there are at most $2\alpha B$ columns $j \in \{ 1, \ldots, B \}$ of $A_{i, j}$ such that $\frac{1}{B} \sum_{b = 1}^B A_{\pi_b, \pi_j} \leq \alpha$. By the exchangeability of $\epsilon$ and the aforementioned symmetry property, the ``identity column'' $\pvalue = \frac{1}{B}\sum_{b = 1}^B A_{\pi_b, Id}$ is exchangeable with the other $B - 1$ columns, and so $\bP(\pvalue \leq \alpha) \leq 2\alpha$.

Two things remain to be done:
\begin{itemize}
    \item Construct an array of comparisons $T(\pi_1, \pi_2; \epsilon)$ with the necessary symmetry property.
    \item Show that the constructed array of comparisons actually describes the p-value in \eqref{thm:validity}.
\end{itemize}
We construct the following array of comparisons between permutations $\pi_1, \pi_2 \in S_n$:
\begin{align*}
    T(\pi_1, \pi_2; \epsilon) &= \omega\big(\cM(\epsilon, X_{\pi_2}, [Z_{\pi_1}, Z_{\pi_2}])\big).
\end{align*}
Then we have, for any $\sigma \in S_n$,
\begin{align*}
    T(\pi_1, \pi_2; \epsilon_\sigma) &= \omega\big(\cM(\epsilon_\sigma, X_{\pi_2}, [Z_{\pi_1}, Z_{\pi_2}])\big) \\
    &= \omega\big(\cM(\epsilon, X_{\sigma^{-1} \circ \pi_2}, [Z_{\sigma^{-1} \circ \pi_1}, Z_{\sigma^{-1} \circ \pi_2}])\big) \\
    &= T(\sigma^{-1} \circ \pi_1, \sigma^{-1} \circ \pi_2; \epsilon).
\end{align*}
Then by theorem 3.1 in \citet{guan2023conformal}, under the null model we have
\begin{align}\label{math:eps-pval}
    \bP \bigg(\frac{1 + \sum_{b = 1}^B  I(T(\pi_b, Id; \epsilon) \geq T(Id, \pi_b; \epsilon))}{1 + B} \leq \alpha \bigg) \leq 2\alpha.
\end{align}
Notice that, for any $\theta \in \R^p$ and $\pi \in S_n$, we have
\begin{align*}
    T(\pi, Id; Y) &= \omega\big(\cM(Y, X, [Z_{\pi}, Z])\big) 
        = \omega\big(\cM(\epsilon + Z\theta, X, [Z_{\pi}, Z])\big) 
        = \omega\big(\cM(\epsilon, X, [Z_{\pi}, Z])\big) \\
        &= T(\pi, Id; \epsilon),\\
    T(Id, \pi; Y) &= \omega\big(\cM(Y, X_\pi, [Z, Z_{\pi}])\big) 
        = \omega\big(\cM(\epsilon + Z\theta, X_\pi, [Z, Z_{\pi}])\big) 
        = \omega\big(\cM(\epsilon, X_\pi, [Z, Z_{\pi}])\big) \\
        &= T(Id, \pi; \epsilon).
\end{align*}
Hence \eqref{math:eps-pval} implies $\bP(\pvalue \leq \alpha) \leq 2\alpha$, as desired.

\section{Discussion and Extensions}\label{sec:discussion}

We have established that RobustPALMRT controls the type I error rate for a test of model \eqref{math:general-null} under only exchangeability of $\epsilon$ and linearity of $Z$.
The introduction of the shift invariance condition to replace least squares projection expands the capabilities of \citet{guan2023conformal} in directions that align with sound data analysis.
In particular, we show that it is possible to perform scale estimation while maintaining shift invariance, and so our method can be applied to settings that are not scale invariant, something that commonly arises when conducting a data analysis.

The extension covers a broad swath of robust regression methods.
Our simulations show that the common pattern in robust regression holds in this setting -- little is lost when stringent parametric assumptions hold and much is gained when they do not hold.
In particular, we found that RobustPALMRT dramatically outperforms competing methods when there is strong skewness or heavy tails, while the method is nearly as efficient as the F-test when the errors are normal.

We have also demonstrated the flexibility of the framework by designing a quantile-based test for heteroskedasticity in immune responses of Long-COVID patients. Our simulations suggest that our test is more efficient than existing tests when the response has varying dispersion, a property the MY-LC dataset has.

While we have only demonstrated two possible uses for this framework, there are a vast number of potential extensions.
For example, we dealt exclusively with univariate responses, but the theory also holds for multivariate responses.
In immunology, it is common to look at sets of markers, and this can easily fit within RobustPALMRT. Future work in this direction might look into ways to take advantage of classical multiple comparison methods to search through these sets of markers.

Furthermore, the model fitting procedure could return an entire family of solutions, such as the entire model fitting path when varying the regularization parameter $\lambda$ in a LASSO fit.
In particular, this means that one can use a high dimensional set of features for $X$. Unfortunately, $Z$ is currently limited to being of modest dimension; we do know how to include high dimensional $Z$ given known bounds on the norm of $\theta$, but the method has low power, and so we did not include it in this work.

While powerful, the RobustPALMRT framework is not a panacea.
The theoretical results rest on the assumption that all nonlinearity is captured in the covariates of interest and the error, namely model \eqref{math:general-null}.
Exchangeability is an essential part of our current proofs, and our own preliminary investigation (not shown here) suggests that the type I error rate may not be controlled when this assumption is violated, for example, when there is moderately strong unmodeled heteroskedasticity.

Finally, given the scope of the framework, RobustPALMRT is not narrowly proscriptive. 
Analysts have the flexibility to select from many possible robust regression methods and many possible evaluation criteria.  Our simulations suggest that the (often neglected) latter choice can be influential.
We suggest that researchers seeking a formal test that guarantees type I error rate control while having scope for creative applications consider using RobustPALMRT.

\section*{Acknowledgments}

The authors would like to thank Dr.\ Leying Guan for the insightful discussions and feedback, and for providing code to reproduce her analyses.

\bibliographystyle{agsm}

\bibliography{bib}

\newpage

\renewcommand{\thesection}{A}
\section{Additional details for Huber regression}

\renewcommand{\thealgorithm}{A1}
\begin{algorithm}[!ht]
\caption{Huber regression with MAD scale estimation \citep{venables2013modern}}\label{alg:huber-mad-scale}
\begin{algorithmic}
\Require Rectangular data set $[Y, C]$.
 $Y \in \R^n$ are responses; $C \in \R^{n \times r}$ are covariates.
\State $R^0 \gets$ Residuals from OLS regression of $Y$ on $C$.
\While{Not converged}
    \State $s^k \gets 1.4826 * \text{median}_{i = 1, \ldots, n}(\text{abs}(R_i^k) )$
    \For{$i = 1, \ldots, n$}
        \State $w_i^k \gets \min \left(1, \frac{1.345}{\text{abs}(R_i^k)/s^k}\right)$
    \EndFor
    \State $R^{k + 1} \gets$ Residuals from WLS regression of $R^k$ on $C$ with weights $w^k$.
    \State Terminate if $\frac{||R^{k + 1} - R^k||_2}{||R^k||_2} < \text{Threshold}$
\EndWhile
\State \Return $R^{\text{Final}}$, $s^{\text{Final}}$
\end{algorithmic}
\end{algorithm}

\renewcommand{\thealgorithm}{A2}
\begin{algorithm}[!ht]
\caption{Huber regression with fixed scale \citep{venables2013modern}}\label{alg:huber-no-scale}
\begin{algorithmic}
\Require
$Y \in \R^n$ are responses; $C \in \R^{n \times r}$ are covariates; $s \in \R$, $s > 0$ is a fixed scale.
\State $R^0 \gets$ Residuals from OLS regression of $Y$ on $C$.
\While{Not converged}
    \For{$i = 1, \ldots, n$}
        \State $w_i^k \gets \min \left(1, \frac{1.345}{\text{abs}(R_i^k)/s}\right)$
    \EndFor
    \State $R^{k + 1} \gets$ Residuals from WLS regression of $R^k$ on $C$ with weights $w^k$.
    \State Terminate if $\frac{||R^{k + 1} - R^k||_2}{||R^k||_2} < \text{Threshold}$
\EndWhile
\State \Return $R^{\text{Final}}$, $s^{\text{Final}}$
\end{algorithmic}
\end{algorithm}

Our suggested Huber regression based approach relies on the ability to perform scale estimation in a way that satisfies conditions \ref{cond:Z-shift} and \ref{cond:symm} of section 3 of the main paper. In algorithm \ref{alg:huber-mad-scale}, we provide an example of such an algorithm, which we note is the default method used for a Huber regression in the MASS package for R \citep{venables2013modern}. We prove two properties for the output of algorithm \ref{alg:huber-mad-scale}; analogous properties hold for algorithm \ref{alg:huber-no-scale} with essentially the same proof.

Let $H(C)$ be the "hat matrix" (or projection matrix) that projects a vector $Y$ onto the column space of $C$.
Let $\Pi_\pi$ denote the permutation matrix for a permutation $\pi \in S_n$.

\begin{lemma}\label{lem:huber-properties}
    Let $R_{\text{Huber}}(Y; C), s_{\text{Huber}}(Y; C)$ be the output of algorithm \ref{alg:huber-mad-scale} with covariates $C$. Then for any $\gamma \in \R^r$, we have \begin{align*}
        R_{\text{Huber}}(Y + C\gamma; C) &= R_{\text{Huber}}(Y; C), \\
        s_{\text{Huber}}(Y + C\gamma; C) &= s_{\text{Huber}}(Y; C),
    \end{align*} and for any $\pi \in S_n$, we have 
    \begin{align*}
        R_{\text{Huber}}(Y_\pi; C) &= R_{\text{Huber}}(Y; C_{\pi^{-1}})_\pi, \\
        s_{\text{Huber}}(Y_\pi; C) &= s_{\text{Huber}}(Y; C_{\pi^{-1}}).
    \end{align*} 
\end{lemma}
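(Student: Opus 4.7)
My plan is to prove both conclusions by induction on the IRLS iteration index $k$ in Algorithm A1, establishing at each iteration that the intermediate residuals $R^k$ and scale $s^k$ inherit the required invariance from the inputs; taking the converged iterates then yields the statement for $R_{\text{Huber}}$ and $s_{\text{Huber}}$. Throughout, the termination criterion depends only on the ratio $\|R^{k+1} - R^k\|_2 / \|R^k\|_2$, which is preserved under shifts by $C\gamma$ (since by induction the $R^k$'s themselves are) and under permutations (since $\ell_2$ norms are permutation invariant), so convergence occurs at the same iteration regardless of the transformation.

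For the shift invariance, the base case uses that $R^0(Y; C) = (I - H(C)) Y$ where $H(C)$ projects onto $\text{col}(C)$, so $C\gamma \in \text{col}(C)$ implies $R^0(Y + C\gamma; C) = R^0(Y; C)$. For the inductive step, if $R^k$ is unchanged under the shift, then $s^k$ (a function of $R^k$ alone) and $w^k$ (elementwise in $R^k / s^k$) are unchanged, and the WLS regression of $R^k$ on $C$ with weights $w^k$ depends only on $R^k$, $C$, and $w^k$, not on $Y$ directly. Hence $R^{k+1}$ and $s^{k+1}$ are unchanged, closing the induction.

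For the permutation equivariance, let $\Pi$ be the permutation matrix of $\pi$ with $(\Pi v)_i = v_{\pi(i)}$, so $Y_\pi = \Pi Y$ and $C_{\pi^{-1}} = \Pi^{-1} C$. The base case follows from the identity $H(\Pi^{-1} C) = \Pi^{-1} H(C) \Pi$ (a consequence of $\Pi^T = \Pi^{-1}$), giving $R^0(Y_\pi; C) = (I - H(C)) \Pi Y = \Pi(I - H(\Pi^{-1} C)) Y = R^0(Y; C_{\pi^{-1}})_\pi$. For the inductive step, assume $R^k(Y_\pi; C) = \Pi R^k(Y; C_{\pi^{-1}})$. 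The MAD is a symmetric function of its entries, so $s^k(Y_\pi; C) = s^k(Y; C_{\pi^{-1}})$, and since the weights are elementwise, $w^k(Y_\pi; C) = \Pi w^k(Y; C_{\pi^{-1}})$. The WLS step is the key algebraic check: writing $W = \text{diag}(w^k(Y; C_{\pi^{-1}}))$, one verifies $C^T (\Pi W \Pi^{-1}) C = C_{\pi^{-1}}^T W C_{\pi^{-1}}$ and $C^T (\Pi W \Pi^{-1}) (\Pi R^k(Y; C_{\pi^{-1}})) = C_{\pi^{-1}}^T W R^k(Y; C_{\pi^{-1}})$, so the WLS update coefficient $\hat\delta$ is identical in the permuted and original problems. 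The residuals then satisfy $R^{k+1}(Y_\pi; C) = \Pi R^k(Y; C_{\pi^{-1}}) - C \hat\delta = \Pi(R^k(Y; C_{\pi^{-1}}) - C_{\pi^{-1}} \hat\delta) = \Pi R^{k+1}(Y; C_{\pi^{-1}})$, closing the induction.

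The main obstacle, while not conceptually deep, is the WLS algebraic check together with the careful tracking of $\Pi$ and $\Pi^{-1}$ through the hat-matrix identities. Once this is verified, the induction proceeds cleanly through every iteration, and since convergence occurs at the same iteration in both problems, the invariances transfer to $R^{\text{Final}}$ and $s^{\text{Final}}$, which are by definition $R_{\text{Huber}}$ and $s_{\text{Huber}}$. The proof for Algorithm A2 is essentially identical, with $s$ fixed throughout rather than recomputed, so the MAD step drops out entirely.
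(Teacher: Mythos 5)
Your proposal is correct and follows essentially the same route as the paper's proof: the base case via the unweighted hat matrix, then induction through the IRLS iterations using the symmetry of the MAD, the elementwise nature of the weights, and the equivariance of the weighted least squares update. Your explicit observation that the termination criterion is itself invariant (so both runs converge at the same iteration) is a small point the paper leaves implicit, but it does not change the argument.
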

\begin{proof}
    For the shift invariance, notice that 
    \begin{align*}
        R^0(Y + C\gamma; C) = Y + C\gamma - H(C) (Y + C\gamma) = Y - H(C) Y + C\gamma - C\gamma = R^0(Y; C).
    \end{align*}
    Since all remaining steps of the algorithm depend on $Y$ (or $Y + C\gamma$) only through $R^0$, the result follows.

    Let $\pi \in S_n$. Note that 
    \begin{align*}
        R^0(Y_\pi; C) = Y_\pi - H(C) Y_\pi = \Pi_\pi (Y - H(C_{\pi^{-1}}) Y) = R^0(Y; C_{\pi^{-1}})_\pi.
    \end{align*}
    Suppose that $R^k(Y_\pi; C) = R^k(Y; C_{\pi^{-1}})_\pi$. Then, since $s^k$ is a symmetric function of $R^k$, we have that $s^k(Y_\pi; C) = s^k(Y; C_{\pi^{-1}})$. Furthermore, the vector $w^k$ is symmetric in permutations of $R^k$ and $s^k$, and so $w^k(Y_\pi; C) = w^k(Y; C_{\pi^{-1}})_\pi$. Letting $H_w(C) = I - C (C^T W C)^{-1} C^T W$, we have that
    \begin{align*}
        R^{k + 1}(Y_\pi; C)
        &= R^k(Y_\pi; C) - H_{w^k(Y_\pi; C)}(C) R^k(Y_\pi; C) \\
        &= \Pi_\pi \left(R^k(Y; C_{\pi^{-1}}) - H_{w^k(Y; C_{\pi^{-1}})}(C_{\pi^{-1}}) R^k(Y; C_{\pi^{-1}})\right) \\
        &= R^{k + 1}(Y; C_{\pi^{-1}})_\pi.
    \end{align*}
    Hence by induction, since $R^0(Y_\pi; C) = R^0(Y; C_{\pi^{-1}})_\pi$, we have that $R^k(Y_\pi; C) = R^k(Y; C_{\pi^{-1}})_\pi$ for all $k \geq 0$. In particular, $R^{\text{Final}}(Y_\pi; C) = R^{\text{Final}}(Y; C_{\pi^{-1}})_\pi$ and $s^{\text{Final}}(Y_\pi; C) = s^{\text{Final}}(Y; C_{\pi^{-1}})$, as desired.
\end{proof} 

To show that the output $\pvalue$ from algorithm \ref{alg:mad-scale-full} controls type I error rate, we need to show that 
\begin{align*}
    \cM(Y, X, [Z, Z_\pi]) = \Big(\hat{s}(Y, X, [Z, Z_\pi]), \mathbf{r}(Y, X, [Z, Z_\pi])\Big)
\end{align*}
satisfies conditions \ref{cond:Z-shift} and \ref{cond:symm}. In particular,
\begin{align*}
    \hat{s}(Y + [Z, Z_\pi]\gamma, X, [Z, Z_\pi]) &= s_{\text{Huber}}(Y + [Z, Z_\pi]\gamma; [Z, Z_\pi]) 
        \\&= s_{\text{Huber}}(Y; [Z, Z_\pi]) = \hat{s}(Y, X, [Z, Z_\pi]),
    \\
    \mathbf{r}(Y + [Z, Z_\pi]\gamma, X, [Z, Z_\pi]) &= \text{Sort}(R_{\text{Huber}}(Y + [Z, Z_\pi]\gamma; [X, Z, Z_\pi])) 
        \\&= \text{Sort}(R_{\text{Huber}}(Y; [X, Z, Z_\pi])) = \mathbf{r}(Y, X, [Z, Z_\pi]),
    \\
    \hat{s}(Y_\sigma, X_\sigma, [Z, Z_\pi]_\sigma) &= s_{\text{Huber}}(Y_\sigma; [Z, Z_\pi]_\sigma)
        \\&= s_{\text{Huber}}(Y; [Z, Z_\pi]) = \hat{s}(Y, X, [Z, Z_\pi]),
    \\
    \mathbf{r}(Y_\sigma, X_\sigma, [Z, Z_\pi]_\sigma) &= \text{Sort}(R_{\text{Huber}}(Y_\sigma, [X, Z, Z_\pi]_\sigma))
        \\&= \text{Sort}(R_{\text{Huber}}(Y; [X, Z, Z_\pi])) = \mathbf{r}(Y, X, [Z, Z_\pi]).
\end{align*}
Hence we can apply theorem \ref{thm:validity}.

\renewcommand{\thesection}{B}
\section{Additional details for the design of simulations}\label{sec:appendix-sims}

We run two main experiments.
One experiment aims to find the effect that the error distribution, covariate distribution, and number of covariates has on the power of the methods; we will refer to this as the factorial experiment.
The other experiment aims to find the effect that the error distribution, covariate distribution, and number of samples has on the power of the methods; we will refer to this as the sample size experiment.

For both experiments, we first generate the covariates $[X, Z]$ and $\epsilon$, and then set
\begin{align*}
    Y = X\beta + Z\theta + \epsilon.
\end{align*}
We use $\theta = 0$ in all simulations.  Since all tested methods eliminate the effect of $Z\theta$, this has no effect on the results.
We select $\beta$ such that the F-test at nominal level $\alpha = 0.05$ for the partial correlation of $X$ with $Y$ has a particular power, ranging from $\{ 0.2, 0.4, 0.6, 0.8, 0.95 \}$.
To do this, we generate $40000$ versions of $X$, $Z$, and $\epsilon$ to get a Monte-Carlo estimate for the power of the F-test for each choice of $\beta$.
We then use the Brentq root finding approach, implemented in Scipy \citep{scipy2020} to find $\beta$ such that the Monte-Carlo evaluated F-test power matches the target.
After selecting $\beta$, for both experiments we use $1000$ trials in each setting and $B = 999$ Monte-Carlo replicates for the p-values.

For selecting the covariates, $[X, Z]$, we use random designs with varying tail weights.  The tail weights allow us to understand the effect that the distribution of leverages has on the test.
In particular, designs that use heavier tailed covariates will generally have more skewed distributions of the individual case leverages.
All settings include an intercept column in $Z$, which counts as one of the dimensions of $p$ (i.e. in the $p = 2$ setting, $Z$ has one intercept column and one additional column).
We consider the following choices for $[X, Z]$:
\begin{itemize}
    \item $[X, Z] \iid N(0, 1)$ - the Normal design.
    \item $[X, Z] \iid t_3$ - the $t_3$ design.
    \item $[X, Z] \iid \text{Cauchy}(0, 1)$ - the Cauchy design.
    \item $[X, Z]$ is all $\{0, 1\}$ valued, with each row containing only an intercept and one treatment - the Balanced ANOVA design.
\end{itemize}
In the Balanced ANOVA design, the sample size $n$ is rounded down to the nearest value such that there are the same number of observations for each treatment.
We found that the empirical distribution of the leverages has approximately the same effect on the F-test as it has on RobustPALMRT's performance, and so in the main body of the paper we only reported results in the Normal design setting.

For selecting the errors, $\epsilon$, we wanted to assess how tail weight and skewness affect the various methods.
We consider the following choices for $\epsilon$:
\begin{itemize}
    \item $\epsilon \iid N(0, 1)$ - Normal errors.
    \item $\epsilon \iid t_3$ - $t_3$ errors.
    \item $\epsilon \iid \text{Cauchy}(0, 1)$ - Cauchy errors.
    \item $\epsilon \iid \text{LogNormal}(\sigma = 1)$ - Log Normal errors. 
    \item $\epsilon \sim N(0, I_n) + (-1)^{\text{Bernoulli}(0.5)} \cdot 10^4 \cdot \text{Multinomial}\left(1; \frac{1}{n}, \ldots, \frac{1}{n}\right)$ - Multinomial errors.
\end{itemize}
The Log Normal setting has individual entries $\log(\epsilon_i) \sim N(0, 1)$.
This is a moderately skewed setting, having skewness $\gamma_1 = 6.18$.
The Multinomial errors setting represents a Normal errors setting except with one extreme outlier.

Lastly, for each setting, we additionally evaluated the type I error rate by setting $\beta = 0$ and otherwise following the same procedures.

In the factorial experiment, we vary $p \in \{ 2, 6, 16 \}$, and we test all 20 design/error distribution combinations.

In the sample size experiment, we fix $p = 6$, and we test the most interesting settings from the factorial setting, varying $n \in \{ 25, 50, 100, 200, 400 \}$.
The selected settings were
\begin{itemize}
    \item Normal design, Normal errors.
    \item Normal design, $t_3$ errors.
    \item Normal design, Cauchy errors.
    \item Normal design, Log Normal errors.
    \item Normal design, Multinomial errors.
    \item Cauchy design, Normal errors.
    \item Cauchy design, $t_3$ errors.
    \item Cauchy design, Cauchy errors.
    \item Cauchy design, Log Normal errors.
    \item Cauchy design, Multinomial errors.
    \item $t_3$ design, Log Normal errors.
    \item Balanced ANOVA design, Log Normal errors.
\end{itemize}
This collection of settings includes all of the error distributions we consider along with both normal and Cauchy errors.  The Normal designs tend to have no or only a couple high leverage points, whereas the Cauchy design creates a few very high leverage cases. 
Additionally we include two more settings with Log Normal errors in a variety of settings, since the skewed errors setting is a large part of the novelty of these methods.
However, we did not find any surprising results in this direction, and so for the sake of ease of presentation we plotted results only for the Normal and Cauchy designs.

All of the experiments were run using a cluster with 40 CPUs @ 2.40 GHz.
The total runtime across all experiments was 109 hours.
Originally we ran the experiments with a fixed scale rather than using scale estimation; the results from those experiments did not make it into the paper, since the estimated scale setting is more realistic and interesting.
Code is available to reproduce all of the results, or to efficiently produce similar results in new settings.
We select seeds in an arithmetic progression, and for reproducibility our outputs include which seed was used.
On a personal computer, it is feasible to assess the power of these methods in a new setting (for example, a different design, sample size, or error distribution) that is similar to the tested ones in under an hour of compute time.

\renewcommand{\thesection}{C}
\section{Proof of theorem 1}\label{sec:proofs}

Our proof of theorem \ref{thm:validity} owes much to the development in \citet{guan2023conformal}.
We take a slightly different approach, however, in defining a more general \emph{population} p-value and showing that it controls type I error.  We then use this to derive the guarantee for a Monte-Carlo p-value that relies on the randomly selected permutations $\pi_1, \ldots, \pi_B$.
This approach follows the ideas in \citet{ramdas2023permutation} for creating permutation tests with permutations drawn from some non-uniform distribution of permutations $F$ over the group of permutations $S_n$. 

For the remainder of this section, let us assume that $H_0: \beta = 0$ holds; that is, $Y = Z\theta + \epsilon$ for some fixed $\theta \in \R^p$ and some exchangeable distribution for $\epsilon$.
Furthermore, assume that we have a model fitting procedure $\cM$ that satisfies conditions \ref{cond:Z-shift} and \ref{cond:symm}.
Our proof applies to a more general setting, where the model evaluation procedure is allowed to simultaneously consider information from both model fits that it is comparing.
We formalize this by defining a model comparison function $\Tilde{\omega}(M_1, M_2) \in [0, 1]$ that satisfies $\Tilde{\omega}(M_1, M_2) = 1 - \Tilde{\omega}(M_2, M_1)$.
The framework discussed in the paper uses 
\begin{align*}
    \Tilde{\omega}(M_1, M_2) = I(\omega(M_1) > \omega(M_2)) + \frac{1}{2} I(\omega(M_1) = \omega(M_2)).
\end{align*}
Notice that this choice handles ties in a slightly different fashion than is done in the main paper.  For ease of presentation, in the main paper we handled ties conservatively and thus the p-values in the main paper are greater than the p-values discussed here, so the theory will carry over.

Recall that for each $\pi \in S_n$, we compute
\begin{align*}
    M^{\pi}_{\text{Orig}}
        &= \cM(Y, X, [Z, Z_{\pi}]) \\
        &\overset{H_0}{=} \cM(Z\theta + \epsilon, X, [Z, Z_{\pi}]) \\
        &= \cM(\epsilon, X, [Z, Z_\pi]), \\
    M^{\pi}_{\text{Perm}}
        &= \cM(Y, X_\pi, [Z, Z_{\pi}]) \\
        &\overset{H_0}{=} \cM(Z\theta + \epsilon, X_\pi, [Z, Z_{\pi}]) \\
        &= \cM(\epsilon, X_\pi, [Z, Z_\pi]).
\end{align*}
Hence under the null hypothesis, the expression $\omega(M^{\pi}_{\text{Orig}}, M^{\pi}_{\text{Perm}})$ does not depend on the unknown effect of the control covariates, $\theta$. For any pair of permutations $\pi, \sigma \in S_n$, the exchangeability of $\epsilon$ implies that
\begin{align*}
    \Tilde{\omega}(M^{\pi}_{\text{Orig}}, M^{\pi}_{\text{Perm}}) 
        &\overset{H_0}{=} \Tilde{\omega}(\cM(\epsilon, X, [Z, Z_\pi]), \cM(\epsilon, X_\pi, [Z, Z_\pi])) \\
        &\overset{d}{=} \Tilde{\omega}(\cM(\epsilon_{\sigma^{-1}}, X, [Z, Z_\pi]), \cM(\epsilon_{\sigma^{-1}}, X_\pi, [Z, Z_\pi])) \\
        &= \Tilde{\omega}(\cM(\epsilon, X_\sigma, [Z_\sigma, Z_{\sigma \circ \pi}]), \cM(\epsilon, X_{\sigma \circ \pi}, [Z_\sigma, Z_{\sigma \circ \pi}])) \\
        &= \Tilde{\omega}(\cM(\epsilon, X_\sigma, [Z_\sigma, Z_\tau]), \cM(\epsilon, X_\tau, [Z_\sigma, Z_\tau])),
\end{align*}
where we reparameterized to $\tau = \sigma \circ \pi$. This suggests that we define the array $A \in \R^{n! \times n!}$ by
\begin{align}\label{math:A}
    A(\tau, \sigma; \epsilon) = \Tilde{\omega}(\cM(\epsilon, X_\sigma, [Z_\sigma, Z_\tau]), \cM(\epsilon, X_\tau, [Z_\sigma, Z_\tau])).
\end{align}
Notice that the p-value \eqref{p-val} satisfies the following:
\begin{align*}
    \pvalue
        &= \frac{1 + \sum_{b = 1}^B \Tilde{\omega}(M^{{\pi_b}}_{\text{Orig}}, M^{{\pi_b}}_{\text{Perm}})}{1 + B} \\
        &\overset{H_0}{=} \frac{1 + \sum_{b = 1}^B \Tilde{\omega}(\cM(\epsilon, X, [Z, Z_{\pi_b}]), \cM(\epsilon, X_{\pi_b}, [Z, Z_{\pi_b}]))}{1 + B} \\
        &= \frac{1 + \sum_{b = 1}^B A(\pi_b, Id; \epsilon)}{1 + B} \\
        &\overset{d}{=} \frac{1 + \sum_{b = 1}^B A(\pi_b, Id; \epsilon_{\sigma^{-1}})}{1 + B} \\
        &= \frac{1 + \sum_{b = 1}^B A(\sigma \circ \pi_b, \sigma; \epsilon)}{1 + B} \\
        &= \frac{1}{1 + B} + \sum_{b = 1}^B \frac{1}{1 + B} A(\sigma \circ \pi_b, \sigma; \epsilon) \\
        &\overset{B \to \infty}{\longrightarrow} \sum_{\pi \in S_n} \bP_{\pi_b} (\pi_b = \pi) A(\sigma \circ \pi, \sigma; \epsilon) \\
        &= \E_{\pi_b}[A(\sigma \circ \pi_b, \sigma; \epsilon) | \epsilon].
\end{align*}
We have not yet appealed to any properties of the randomly selected $\pi_b$. Let $\pi_b \sim F$, for some distribution $F$ on the group of permutations $S_n$, with probability mass function $f(\pi)$.
If $F$ is a uniform distribution on some subgroup $G$ of $S_n$, then $\sigma \circ \pi_b \overset{d}{=} \pi_b$, and so
\begin{align*}
    \E_{\pi_b}\big[A(\pi_b, Id; \epsilon) ~|~ \epsilon\big]
    &\overset{d}{=} \E_{\pi_b}\big[A(\pi_b, Id; \epsilon_{\sigma^{-1}}) ~|~ \epsilon\big] \\
    &=\E_{\pi_b}\big[A(\sigma \circ \pi_b, \sigma; \epsilon) ~|~ \epsilon\big] \\
    &= \E_{\pi_b}\big[A(\pi_b, \sigma; \epsilon) ~|~ \epsilon\big].
\end{align*}

Since $\E_{\pi_b}\big[A(\pi_b, \tau; \epsilon) | \epsilon\big]$ is the average of column $\tau$ of the matrix $A$, and $A(\pi_b, \tau; \epsilon) + A(\tau, \pi_b; \epsilon) = 1$, we are prompted to bound the probability that the average of a \emph{randomly} selected column is particularly small.

We now have sufficient motivation to formally state our result.
Let $F$ be a (possibly non-uniform) distribution on the group of permutations $S_n$, with probability mass function $f(\pi)$. 
Select $\sigma \overset{ind}{\sim} F$, and define the \emph{population} p-value as:
\begin{align}\label{math:random-pop-pval}
    \pvalue^{\textup{Pop}}(\sigma)
        &= \E_{\pi_b \sim F}\big[A(\sigma^{-1} \circ \pi_b, Id; \epsilon) ~|~ \epsilon \big] \\
        &\overset{H_0}{=} \E_{\pi_b \sim F}\big[A(\sigma^{-1} \circ \pi_b, Id; Y) ~|~ \epsilon \big]
\end{align}
Furthermore, if we select $\sigma, \pi_1, \ldots, \pi_B \iid F$, then we define the \emph{Monte-Carlo} p-value as:
\begin{align}\label{math:random-MC-pval}
    \pvalue^{\textup{MC}}(\sigma)
        &= \frac{1 + \sum_{b = 1}^B A(\sigma^{-1} \circ \pi_b, Id; \epsilon)}{1 + B} \\
        &\overset{H_0}{=} \frac{1 + \sum_{b = 1}^B A(\sigma^{-1} \circ \pi_b, Id; Y)}{1 + B}].
\end{align}
Then we have the following result:
\begin{theorem}\label{thm:pop-validity}
    Suppose $\cM$ satisfies conditions \ref{cond:Z-shift}, \ref{cond:symm} and $\Tilde{\omega}$ satisfies $\Tilde{\omega}(M_1, M_2) + \Tilde{\omega}(M_2, M_2) = 1$. 
    Let $Y$ be generated according to model \eqref{regression-setup},
    and let $\sigma, \pi_1, \ldots, \pi_B \iid F$ for any (possibly non-uniform) distribution $F$ over the group of permutations $S_n$.
    Then for any $\theta \in \R^p$ and any exchangeable $\epsilon$, under the null hypothesis $H_0$ we have:
    \begin{align*}
        &\bP_{H_0}
        \left(\pvalue^{\textup{Pop}}(\sigma) \leq \alpha \right) \leq 2\alpha, \\
        &\bP_{H_0}
        \left(\pvalue^{\textup{MC}}(\sigma) \leq \alpha \right) \leq 2\alpha,
    \end{align*}
    where the randomness is averaged over $\epsilon$, $\sigma$, and $\pi_1, \ldots, \pi_B$.

    If, additionally, $F$ is uniform distribution over some subgroup $G \subseteq S_n$, then
    \begin{align*}
        \pvalue^{\textup{Pop}}(\sigma) &\overset{d}{=} \pvalue^{\textup{Pop}}(Id), \\
        \pvalue^{\textup{MC}}(\sigma) &\overset{d}{=} \pvalue^{\textup{MC}}(Id),
    \end{align*}
    and so we have 
    \begin{align*}
        &\bP_{H_0}
        \left(\pvalue^{\textup{Pop}}(Id) \leq \alpha \right) \leq 2\alpha, \\
        &\bP_{H_0}
        \left(\pvalue^{\textup{MC}}(Id) \leq \alpha \right) \leq 2\alpha,
    \end{align*}
    where the randomness includes only $\epsilon$ and $\pi_1, \ldots, \pi_B$.
\end{theorem}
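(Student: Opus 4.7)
My plan is to prove Theorem~\ref{thm:pop-validity} by establishing two structural properties of the array $A(\tau, \sigma; \epsilon)$ from \eqref{math:A} and then invoking a weighted tournament argument. The first is pointwise anti-symmetry, $A(\tau, \sigma; \epsilon) + A(\sigma, \tau; \epsilon) = 1$, which is immediate from the hypothesis $\tilde{\omega}(M_1, M_2) + \tilde{\omega}(M_2, M_1) = 1$. The second is the equivariance
\[
A(\tau, \sigma; \epsilon_\rho) = A(\rho^{-1} \circ \tau,\; \rho^{-1} \circ \sigma; \epsilon)
\]
for every $\rho \in S_n$, which follows from condition~\ref{cond:symm} applied to each of the two $\cM$-calls inside $A$ (the $X$- and $Z$-arguments absorb the $\rho$). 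Specializing to $\rho = \sigma$ yields the key identity $A(\sigma^{-1} \circ \pi, \Id; \epsilon) = A(\pi, \sigma; \epsilon_\sigma)$. Condition~\ref{cond:Z-shift} together with assumption~\ref{assumption:linearity} lets me replace $Y$ by $\epsilon$ throughout, so both $\pvalue^{\textup{Pop}}$ and $\pvalue^{\textup{MC}}$ computed from $Y$ coincide with the versions computed from $\epsilon$.

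Using the key identity I would rewrite $\pvalue^{\textup{Pop}}(\sigma) = \E_{\pi \sim F}[A(\pi, \sigma; \epsilon_\sigma) \mid \epsilon, \sigma]$. Because $\sigma$ is independent of the exchangeable vector $\epsilon$, $(\epsilon, \sigma) \overset{d}{=} (\epsilon_\sigma, \sigma)$, and therefore $\pvalue^{\textup{Pop}}(\sigma) \overset{d}{=} g(\sigma) := \E_{\pi \sim F}[A(\pi, \sigma; \epsilon) \mid \epsilon, \sigma]$ with $\sigma$ drawn from $F$ independently of $\epsilon$. The task reduces to bounding $\bP_{\sigma \sim F}(g(\sigma) \leq \alpha)$ at each realization of $\epsilon$. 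I would prove a weighted Landau inequality: if $A : T \times T \to [0,1]$ is anti-symmetric and $P$ is any probability measure on $T$, then $S := \{y : \int A(x, y) \, dP(x) \leq \alpha\}$ satisfies $P(S) \leq 2\alpha$. A Fubini calculation gives $\int_{S \times S} A(x, y) \, dP(x)\, dP(y) = P(S)^2/2$ by anti-symmetry; since the remaining piece $\int_{S^c \times S} A(x, y) \, dP(x)\, dP(y)$ is nonnegative, $\int_S g(y) \, dP(y) \geq P(S)^2/2$, and combining with $\int_S g(y) \, dP(y) \leq \alpha P(S)$ yields $P(S) \leq 2\alpha$. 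Integrating over $\epsilon$ finishes the population case.

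The Monte-Carlo case follows the same template: the key identity gives $\pvalue^{\textup{MC}}(\sigma) \overset{d}{=} (1 + \sum_{b=1}^B A(\pi_b, \sigma; \epsilon)) / (1 + B)$. Labeling $\tau_0 = \sigma$ and $\tau_b = \pi_b$ for $b \geq 1$, the scores $s_j = \sum_{i \neq j} A(\tau_i, \tau_j; \epsilon)$ in the resulting $(B+1)$-player tournament are exchangeable in $j$ by the iid property of $(\tau_j)$. Classical Landau bounds the number of columns with $s_j \leq t$ by $2t + 1$, so setting $t = \alpha(B+1) - 1$ gives $\bP(\pvalue^{\textup{MC}}(\sigma) \leq \alpha) = \E[K / (B+1)] \leq 2\alpha$. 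The uniform-subgroup statements are then immediate: when $F$ is uniform on $G$ and $\sigma \in G$, $\sigma^{-1} \circ \pi \overset{d}{=} \pi$ in $F$, so pointwise in $\epsilon$ both p-values as functions of $\sigma$ are equal to their $\sigma = \Id$ versions. The main obstacle I anticipate is executing the coupling $(\epsilon, \sigma) \overset{d}{=} (\epsilon_\sigma, \sigma)$ carefully for arbitrary $F$, since unlike the uniform case one cannot hope for a pointwise equality of $\pvalue^{\textup{Pop}}(\sigma)$ and $\pvalue^{\textup{Pop}}(\Id)$ and must instead use distributional identity under the anti-symmetric averaging; with the coupling in place the weighted Landau inequality supplies the $2\alpha$ bound.
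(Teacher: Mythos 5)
Your proposal is correct and follows essentially the same route as the paper's proof: reduce from $Y$ to $\epsilon$ via condition \ref{cond:Z-shift} and linearity, use the permutation equivariance of $\cM$ (condition \ref{cond:symm}) together with exchangeability of $\epsilon$ to couple the p-value to a randomly weighted column of the anti-symmetric array $A$, and bound the mass of small columns with a weighted Landau-type tournament inequality, which is exactly the paper's lemma \ref{lem:weight}. The only substantive variation is the Monte-Carlo step, where you apply the classical unweighted Landau bound directly to the $(B+1)$-player tournament built from the exchangeable sequence $(\sigma, \pi_1, \ldots, \pi_B)$, whereas the paper re-invokes its population result with $F$ replaced by the empirical distribution of that sequence (absorbing the ``$+1$'' as the diagonal entry $A(\Id, \Id) = \tfrac{1}{2}$); both arguments are valid and rest on the same combinatorial fact.
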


Theorem \ref{thm:validity} is a special case of this result, because the p-value \eqref{p-val} in the main text  is $\pvalue^{\textup{MC}}(Id)$ and $F$ is the uniform distribution on $S_n$.
In the case of a uniform distribution on a subgroup, we do not need to generate the random permutation $\sigma \sim F$, and so the only noise in our p-value is the Monte-Carlo noise due to $\{ \pi_1, \ldots, \pi_B \}$; this is relatively benign, since the p-value will converge to the non-randomized \emph{population} p-value $\pvalue^{\textup{Pop}}(Id)$ as $B$ tends to infinity.
If $F$ is non-uniform, then the test is genuinely a randomized test, which may be undesirable.
We note that to reduce the variance due to this randomization, one can sample many choices for $\sigma \sim F$ and average the corresponding p-values at the cost of an additional factor of $2$ in the type I error control guarantee \citep{vovk2020combining}.
More precisely, we can sample $\sigma_1, \ldots, \sigma_M \iid F$ and get
\begin{align*}
    \bP_{H_0} \left(\frac{1}{M} \sum_{j = 1}^M \pvalue^{\textup{Pop}}(\sigma_j) \leq \alpha \right) \leq 4\alpha.
\end{align*}
See \citet{ramdas2023permutation} for more discussion on this.
It is unclear whether the benefit of selecting a non-uniform distribution for $F$ outweighs the cost of a substantially worse bound on the type I error.

\begin{proof}
For notational clarity, we write subscripts on $\bP$ to denote which variables are random in the given expressions.
We first control the population type I error rate:
\begin{align*}
    \bP_{\epsilon, \sigma \sim F} \Big(\pvalue^{\textup{Pop}}(\sigma) \leq \alpha \Big)
        &= \bP_{\epsilon, \sigma \sim F}
            \Big(\E_{\pi_b \sim F} \big[A(\sigma^{-1} \circ \pi_b, Id; \epsilon) ~\big|~ \epsilon \big] \leq \alpha  \Big) \\
        &= \sum_{\tau \in S_n} \bP_{\epsilon, \sigma \sim F}
            \Bigg(\sigma = \tau, \E_{\pi_b \sim F} \big[A(\tau^{-1} \circ \pi_b, Id; \epsilon) ~\big|~ \epsilon \big] \leq \alpha\Bigg) \\
        \text{(Indep. of $\sigma$ and $\epsilon$)} &= \sum_{\tau \in S_n}
            \bP_{\sigma \sim F} \big(\sigma = \tau\big)
            \bP_{\epsilon} \bigg( \E_{\pi_b \sim F} \big[A(\tau^{-1} \circ \pi_b, Id; \epsilon) ~\big|~ \epsilon \big] \leq \alpha \bigg) \\
        \text{(Symmetry of $\cM$)} &= \sum_{\tau \in S_n}
            \bP_{\sigma \sim F} \big(\sigma = \tau\big)
            \bP_{\epsilon} \bigg( \E_{\pi_b \sim F} \big[A(\pi_b, \tau; \epsilon_\tau) ~\big|~ \epsilon \big] \leq \alpha \bigg) \\
        \text{(Exch. of $\epsilon$)}&= \sum_{\tau \in S_n}
            \bP_{\sigma \sim F} \big(\sigma = \tau\big)
            \bP_{\epsilon} \bigg( \E_{\pi_b \sim F} \big[A(\pi_b, \tau; \epsilon) ~\big|~ \epsilon \big] \leq \alpha \bigg) \\
        &= \sum_{\tau \in S_n} f(\tau) \E_{\epsilon} \left[
            I\left( \sum_{\pi \in S_n} f(\pi) A( \pi, \tau; \epsilon) \leq \alpha \right)
        \right] \\
        &= \E_{\epsilon} \left[ 
            \sum_{\tau \in S_n} f(\tau) I\left( \sum_{\pi \in S_n} f(\pi) A(\pi, \tau; \epsilon) \leq \alpha \right)
        \right] \\
        \text{(Lemma \ref{lem:weight})} &\leq \E_{\epsilon} [2\alpha] \\
        &= 2\alpha.
\end{align*}
As an aside, note that the upcoming lemma \ref{lem:weight} provides a deterministic bound on the amount of weight that can be placed on columns with particularly small weighted column sum, in any array $A_{ij}$ with entries in $[0, 1]$ that satisfy $A_{ij} + A_{ji} = 1$.
While it is tempting to try to directly appeal to column exchangeability properties, only the values in the column $A(\pi, Id; \epsilon) = A(\pi; Id;, Y)$ will be computable from the data.  In order to compute $A(\pi, \sigma; \epsilon)$ for some $\sigma \neq Id$, we would need to know the true value of $\epsilon$.
Nonetheless, the $A_{ij} + A_{ji} = 1$ condition lets us bound how many columns can have small column sums, and we can use the symmetry of the matrix $A(\pi, \sigma; \epsilon)$ to relate properties of a randomly selected column to properties of our particular p-value.

Continuing the proof in the style of \citet{ramdas2023permutation}, we now control the type I error for the Monte-Carlo version.
Let $\sigma, \pi_1, \ldots, \pi_B \iid F$, and for ease of notation let $\pi_0 = \sigma$.
Notice that the vector $(\pi_0, \pi_1, \ldots, \pi_B)$ is exchangeable, and thus for any $\tau \in S_{B+1}$, a permutation of the set $\{ 0, \ldots, B \}$, conditioned on the multiset $\{ \pi_0, \ldots, \pi_B \}$ we have
\begin{align*}
    (\pi_{\tau(0)}, \pi_{\tau(1)}, \ldots, \pi_{\tau(B)}) \overset{d}{=} (\pi_0, \pi_1, \ldots, \pi_B).
\end{align*}
In particular, notice that if we define, for any $\Tilde{\sigma}, \Tilde{\pi_0}, \ldots, \Tilde{\pi_B}$,
\begin{align*}
    g(\Tilde{\sigma}, \Tilde{\pi}_0, \ldots, \Tilde{\pi}_B) &= \frac{\sum_{b = 0}^B A(\Tilde{\sigma}^{-1} \circ \Tilde{\pi}_b, Id; \epsilon)}{1 + B},
\end{align*}
then we have (conditioned on $\{ \pi_0, \ldots, \pi_B \}$):
\begin{align}\label{math:g-exch}
    g(\pi_{\tau(0)}, \pi_{\tau(0)}, \ldots, \pi_{\tau(B)}) \overset{d}{=} g(\pi_0, \pi_0, \ldots, \pi_B).
\end{align}
Furthermore, for any $\tau \in S_{B + 1}$, we have
\begin{align}\label{math:g-symm}
    \nonumber
    g(\Tilde{\sigma}, \Tilde{\pi}_{\tau(0)}, \ldots, \Tilde{\pi}_{\tau(B)}) &= \frac{\sum_{b = 0}^B A(\Tilde{\sigma}^{-1} \circ \Tilde{\pi}_{\tau(b)}, Id; \epsilon)}{1 + B} \\
    \nonumber
    &= \frac{\sum_{b = 0}^B A(\Tilde{\sigma}^{-1} \circ \Tilde{\pi}_b, Id; \epsilon)}{1 + B} \\
    &= g(\Tilde{\sigma}, \Tilde{\pi}_0, \ldots, \Tilde{\pi}_B).
\end{align}
Equipped with these two facts, we will first describe the p-values in terms of $g$, and then we will appeal to the validity of the population p-value for \emph{any} choice of $F$.
In particular, define the empirical distribution $\{ \pi_0, \ldots, \pi_B \}$:
\begin{align*}
    \Tilde{F} = \frac{1}{B + 1} \sum_{b = 0}^B \delta_{\pi_b}.
\end{align*}
Then selecting $K \sim \text{Unif}\{0, \ldots, B\}$, we have $\pi_K \sim \Tilde{F}$. Then by the population validity result,
\begin{align*}
    \bP_{\epsilon, K, \pi_0, \ldots, \pi_B } \left(\pvalue^{\textup{Pop}}(\pi_K) \leq \alpha ~\big|~ \{ \pi_0, \ldots, \pi_B \} \right) \leq 2\alpha.
\end{align*}
Notice that
\begin{align*}
    \pvalue^{\textup{Pop}}(\pi_K)
        &= \E_{\Tilde{\pi} \sim \Tilde{F}} \left[
            A(\pi_K^{-1} \circ \Tilde{\pi}, Id; \epsilon) ~|~ \epsilon
        \right] \\
        &= \frac{\sum_{b = 0}^B A(\pi_K^{-1} \circ \pi_b, Id; \epsilon)}{B + 1} \\
        &= g(\pi_K, \pi_0, \ldots, \pi_B),
\end{align*}
and lastly notice that
\begin{align*}
    \pvalue^{\textup{MC}}(\sigma) &= \pvalue^{\textup{MC}}(\pi_0) \\
    &= \frac{1 + \sum_{b = 1}^B A(\pi_0^{-1} \circ \pi_b, Id; \epsilon)}{1 + B} \\
    &= \frac{\frac{1}{2} + \sum_{b = 0}^B A(\pi_0^{-1} \circ \pi_b, Id; \epsilon)}{1 + B} \\
    &= g(\pi_0, \pi_0, \ldots, \pi_B) + \frac{1}{2(1 + B)}.    
\end{align*}
Calling $\tau_k \in S_{B+1}$ the permutation that swaps $0$ and $k$ for any $k \in \{0, \ldots, B\}$, we have
\begin{align*}
    \bP_{\epsilon, K, \pi_0, \ldots, \pi_B }\Big(&\pvalue^{\textup{MC}}(\pi_0) \leq \alpha ~\big|~ \{ \pi_0, \ldots, \pi_B \} \Big) \\
    &= \bP_{\epsilon, K, \pi_0, \ldots, \pi_B }\bigg(g\big(\pi_0, \pi_0, \ldots, \pi_B\big) + \frac{1}{2(1 + B)} \leq \alpha ~\big|~ \{ \pi_0, \ldots, \pi_B \} \bigg) \\
    &\leq \bP_{\epsilon, K, \pi_0, \ldots, \pi_B }\Big(g\big(\pi_0, \pi_0, \ldots, \pi_B\big) \leq \alpha ~\big|~ \{ \pi_0, \ldots, \pi_B \} \Big) \\
    \text{(By \eqref{math:g-exch})} &= \bP_{\epsilon, K, \pi_0, \ldots, \pi_B }\Big(g\big(\pi_{\tau_K(0)}, \pi_{\tau_K(0)}, \ldots, \pi_{\tau_K(B)}\big) \leq \alpha ~\big|~ \{ \pi_0, \ldots, \pi_B \} \Big) \\
    \text{(By \eqref{math:g-symm})} &= \bP_{\epsilon, K, \pi_0, \ldots, \pi_B }\Big(g\big(\pi_{\tau_K(0)}, \pi_0, \ldots, \pi_B \big) \leq \alpha ~\big|~ \{ \pi_0, \ldots, \pi_B \} \Big) \\
    &= \bP_{\epsilon, K, \pi_0, \ldots, \pi_B }\Big(g\big(\pi_K, \pi_0, \ldots, \pi_B \big) \leq \alpha ~\big|~ \{ \pi_0, \ldots, \pi_B \} \Big) \\
    &= \bP_{\epsilon, K, \pi_0, \ldots, \pi_B } \left(\pvalue^{\textup{Pop}}(\pi_K) \leq \alpha ~\big|~ \{ \pi_0, \ldots, \pi_B \} \right) \\
    &\leq 2\alpha.
\end{align*}
Marginalizing over $\{ \pi_0, \ldots, \pi_B \}$, we have
\begin{align*}
    \bP_{\epsilon, K, \pi_0, \ldots, \pi_B }\Big(\pvalue^{\textup{MC}}(\pi_0) \leq \alpha \Big) &= \E_{\epsilon, K, \pi_0, \ldots, \pi_B }\left[\bP_{\epsilon, K, \pi_0, \ldots, \pi_B }\Big(\pvalue^{\textup{MC}}(\pi_0) \leq \alpha ~\big|~ \{ \pi_0, \ldots, \pi_B \} \Big)\right] \\
    &\leq E[2\alpha] \\
    &= 2\alpha,
\end{align*}
as desired.
\end{proof}

We now present the combinatorial bound on the weighted column sums. This is very similar to a proof in appendix E.4.3 of \citet{barber2023conformal}, but this statement is slightly more general so we reproduce it here:
\begin{lemma}{\label{lem:weight}}
    Let $A \in [0, 1]^{m \times m}$ be a matrix with the property that
    \begin{align*}
        A_{ij} + A_{ji} = 1
    \end{align*}
    for all $i, j \in [m]$. Furthermore, let $w_1, \ldots, w_m \in [0, 1]$ have $\sum_{j = 1}^m w_j = 1$. Fix $\alpha \in [0, 1/2]$ and define
    \begin{align*}
        S &= \left\{ i \in [m] ~|~ \sum_{j = 1}^m w_j A_{ij} \geq 1 - \alpha \right\} \\
        &= \left\{ i \in [m] ~|~ \sum_{j = 1}^m w_j A_{ji} \leq \alpha \right\}.
    \end{align*}
    Then
    \begin{align*}
        \sum_{i \in [m]} w_i I \left(\sum_{j \in [m]} w_j A_{ji} \leq \alpha\right) = \sum_{i \in S} w_i \leq 2\alpha.
    \end{align*}
\end{lemma}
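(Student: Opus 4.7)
The plan is to control $W := \sum_{i \in S} w_i$ by feeding the defining inequality of $S$ through a weighted double sum and exploiting the skew-symmetry $A_{ij} + A_{ji} = 1$. Before starting, I would briefly verify the two given descriptions of $S$ coincide by noting $\sum_j w_j A_{ij} + \sum_j w_j A_{ji} = \sum_j w_j = 1$, so the condition $\sum_j w_j A_{ji} \le \alpha$ is equivalent to $\sum_j w_j A_{ij} \ge 1-\alpha$.

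The main step is to multiply the inequality $\sum_j w_j A_{ji} \le \alpha$ (valid for each $i \in S$) by $w_i$ and sum over $i \in S$, yielding
\[
    \sum_{i \in S} \sum_{j \in [m]} w_i w_j A_{ji} \le \alpha W.
\]
Then I would split the inner sum according to whether $j \in S$ or $j \notin S$. The cross term $\sum_{i \in S, j \notin S} w_i w_j A_{ji}$ is nonnegative since $A$ has entries in $[0,1]$, so it can be dropped. For the remaining $S \times S$ block, the symmetrization trick is the key move: by renaming indices,
\[
    \sum_{i,j \in S} w_i w_j A_{ji} = \tfrac{1}{2}\sum_{i,j \in S} w_i w_j (A_{ij} + A_{ji}) = \tfrac{1}{2} \sum_{i,j \in S} w_i w_j = \tfrac{1}{2} W^2.
\]
Combining gives $\tfrac{1}{2} W^2 \le \alpha W$, so $W \le 2\alpha$, which is the claimed bound.

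The displayed identity in the lemma statement is immediate from the definition of $S$, so nothing extra is needed there. The main obstacle is recognizing the symmetrization step — the $S \times S$ block must be folded using $A_{ij} + A_{ji} = 1$ to produce a clean quadratic in $W$; without that, one only gets the trivial bound $W \le 1$. Once that move is made, the rest is one inequality and cancellation of a factor of $W$ (handling $W = 0$ as a separate trivial case).
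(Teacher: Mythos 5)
Your proof is correct and takes essentially the same approach as the paper: both arguments hinge on symmetrizing the $S\times S$ block via $A_{ij}+A_{ji}=1$ to produce $\tfrac12\bigl(\sum_{i\in S}w_i\bigr)^2$ and then cancelling a factor of $W=\sum_{i\in S}w_i$. The only cosmetic difference is that you start from $\sum_j w_j A_{ji}\le\alpha$ and drop the nonnegative off-block term, whereas the paper starts from the equivalent $\sum_j w_j A_{ij}\ge 1-\alpha$ and bounds the off-block contribution above by $\sum_{j\notin S}w_j$; these are the same inequality read from opposite sides.
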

\begin{proof}
    For any $i \in S$,
    \begin{align*}
        1 - \alpha &\leq \sum_{j = 1}^m w_j A_{ij} \\
        &\leq \sum_{j \in S} w_j A_{ij} + \sum_{j \not\in S} w_j \\
        &= \sum_{j \in S} w_j A_{ij} + 1 - \sum_{j \in S} w_j.
    \end{align*}
    Then notice that
    \begin{align*}
        \sum_{i,j \in S} w_i w_j A_{ij} = \sum_{i,j \in S} w_i w_j A_{ji},
    \end{align*}
    and so
    \begin{align*}
        \sum_{i,j \in S} w_i w_j A_{ij} &= \frac{1}{2} \sum_{i,j \in S} w_i w_j \left[A_{ij} + A_{ji}\right] \\  
        &\leq \frac{1}{2} \sum_{i,j \in S} w_i w_j.
    \end{align*}
    Hence
    \begin{align*}
        (1 - \alpha) \sum_{i \in S} w_i &\leq \sum_{i \in S} w_i \left[ \sum_{j \in S} w_j A_{ij} + 1 - \sum_{j \in S} w_j \right] \\
        &= \sum_{i,j \in S} w_i w_j A_{ij} + \sum_{i \in S} w_i - \sum_{i,j \in S} w_i w_i \\
        &\leq \frac{1}{2} \sum_{i,j \in S} w_i w_j + \sum_{i \in S} w_i - \sum_{i,j \in S} w_i w_i \\
        &= -\frac{1}{2} \sum_{i,j \in S} w_i w_j + \sum_{i \in S} w_i \\
        &= -\frac{1}{2} \left(\sum_{i \in S} w_i \right)^2  + \sum_{i \in S} w_i.
    \end{align*}
    Straightforward algebra gives $\sum_{i \in S} w_i \leq 2 \alpha$, as desired.

    To see that the two versions of $S$ are equal, notice that
    \begin{align*}
        \sum_{j \in [m]} w_j A_{ji} &= \sum_{j \in [m]} w_j (1 - A_{ij}) \\
        &= 1 - \sum_{j \in [m]} w_j A_{ij}.
    \end{align*}
\end{proof}

\begin{corollary} Let $A(\pi, \tau; \epsilon)$ be as \eqref{math:A}. For any distribution $F$ on the group of permutations $S_n$, with probability mass function $f(\pi)$, we have:
    \begin{align*}
        \sum_{\tau \in S_n} f(\tau) I \left(\sum_{\pi \in S_n} f(\pi) A(\pi, \tau; \epsilon) \leq \alpha\right) \leq 2\alpha.
    \end{align*}
\end{corollary}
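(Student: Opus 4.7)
The plan is to derive the corollary as a direct application of Lemma \ref{lem:weight} to the enumerated matrix $\bigl(A(\pi, \tau; \epsilon)\bigr)_{\pi, \tau \in S_n}$ with weights $w_\tau = f(\tau)$. Three items must be verified: that the entries of $A$ lie in $[0, 1]$, that they satisfy $A(\pi, \tau; \epsilon) + A(\tau, \pi; \epsilon) = 1$, and that the weights lie in $[0, 1]$ and sum to one. Two of these are immediate: the range $A(\pi, \tau; \epsilon) \in [0, 1]$ follows because $\tilde{\omega}$ is defined to take values in $[0, 1]$, and the weight conditions are automatic since $f$ is a probability mass function.

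The substantive step is the anti-symmetry. Expanding \eqref{math:A} gives
\begin{align*}
    A(\pi, \tau; \epsilon) &= \tilde{\omega}\bigl(\cM(\epsilon, X_\tau, [Z_\tau, Z_\pi]),\, \cM(\epsilon, X_\pi, [Z_\tau, Z_\pi])\bigr),\\
    A(\tau, \pi; \epsilon) &= \tilde{\omega}\bigl(\cM(\epsilon, X_\pi, [Z_\pi, Z_\tau]),\, \cM(\epsilon, X_\tau, [Z_\pi, Z_\tau])\bigr),
\end{align*}
so the identity $\tilde{\omega}(M_1, M_2) + \tilde{\omega}(M_2, M_1) = 1$ yields the desired sum of unity as soon as one has the equalities $\cM(\epsilon, X_\cdot, [Z_\pi, Z_\tau]) = \cM(\epsilon, X_\cdot, [Z_\tau, Z_\pi])$, i.e., $\cM$ is invariant under swapping the two sub-blocks of its third argument. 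I expect this to be the only real obstacle: although it is not listed as a formal condition on $\cM$, it is used implicitly throughout the paper (for example the Section 7 sketch interchanges $[Z, Z_\pi]$ and $[Z_\pi, Z]$ freely), and it holds automatically whenever the output of $\cM$ depends on the third argument only through its column span, which covers all the procedures of interest in this paper, including residuals from M-estimation and quantile fits.

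With the three items in hand, the proof concludes by enumerating $S_n$ in some fixed order to identify the matrix of interest with an $n! \times n!$ matrix satisfying the hypotheses of Lemma \ref{lem:weight}, and then reading off the lemma's conclusion, which transcribes verbatim into the inequality claimed in the corollary. Nothing beyond the anti-symmetry check and an indexing identification is required, so the only creative input is making the implicit column-reordering invariance of $\cM$ explicit; the rest is bookkeeping.
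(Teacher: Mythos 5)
Your proposal matches the paper's proof, which is a one-line application of Lemma \ref{lem:weight} with $m = n!$, $A_{\pi,\tau} = A(\pi,\tau;\epsilon)$, and $w_\pi = f(\pi)$. Your additional care in verifying the anti-symmetry hypothesis $A(\pi,\tau;\epsilon)+A(\tau,\pi;\epsilon)=1$ --- which, as you observe, requires the unstated invariance of $\cM$ to swapping the two column blocks of $[Z_\sigma, Z_\tau]$ --- goes beyond what the paper writes down and correctly surfaces an implicit assumption the paper relies on throughout (e.g., when it freely interchanges $[Z, Z_\pi]$ and $[Z_\pi, Z]$ in Section \ref{sec:proof-thm-valid}).
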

\begin{proof}
    We use lemma \ref{lem:weight} with $m = n!$, $A_{\pi, \tau} = A(\pi, \tau; \epsilon)$, and $w_\pi = f(\pi)$.
\end{proof}

\renewcommand{\thesection}{D}
\section{Additional plots for simulations}\label{sec:additional-plots-sims}


\begin{figure}[!htbp]
    \centering
    \includegraphics[width=0.85\linewidth]{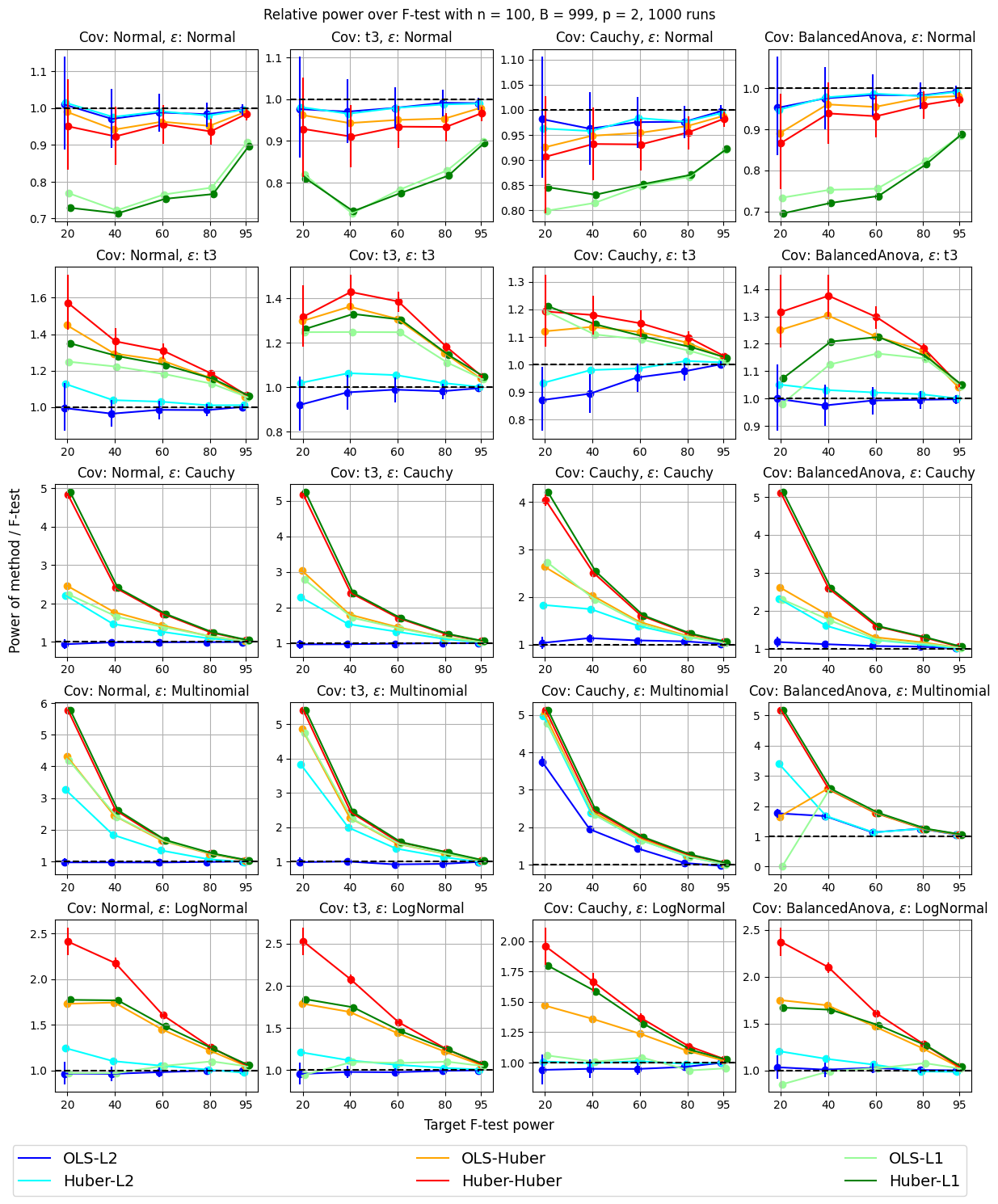}
    \vskip -0.25in
    \caption{
        Relative power for various RobustPALRMT regression approaches compared to the F-test, computed using $1000$ trials.  See section \ref{sec:Simulations}. Here we use $p = 2$ covariates in $Z$ and sample size of $n = 100$. Monte-Carlo $95\%$ error bars are plotted for the two methods of most interest, that is Huber-Huber RobustPALMRT and OLS-L2 PALMRT. The simulation is blocked, meaning for each setting and target F-test power, in a given replicate, the same data set is used for all methods. This leads to conservative error bars for the differences between the methods.
    }
    \label{fig:n100-p2-rel-power}
\end{figure}

\begin{figure}[!htbp]
    \centering
    \includegraphics[width=0.85\linewidth]{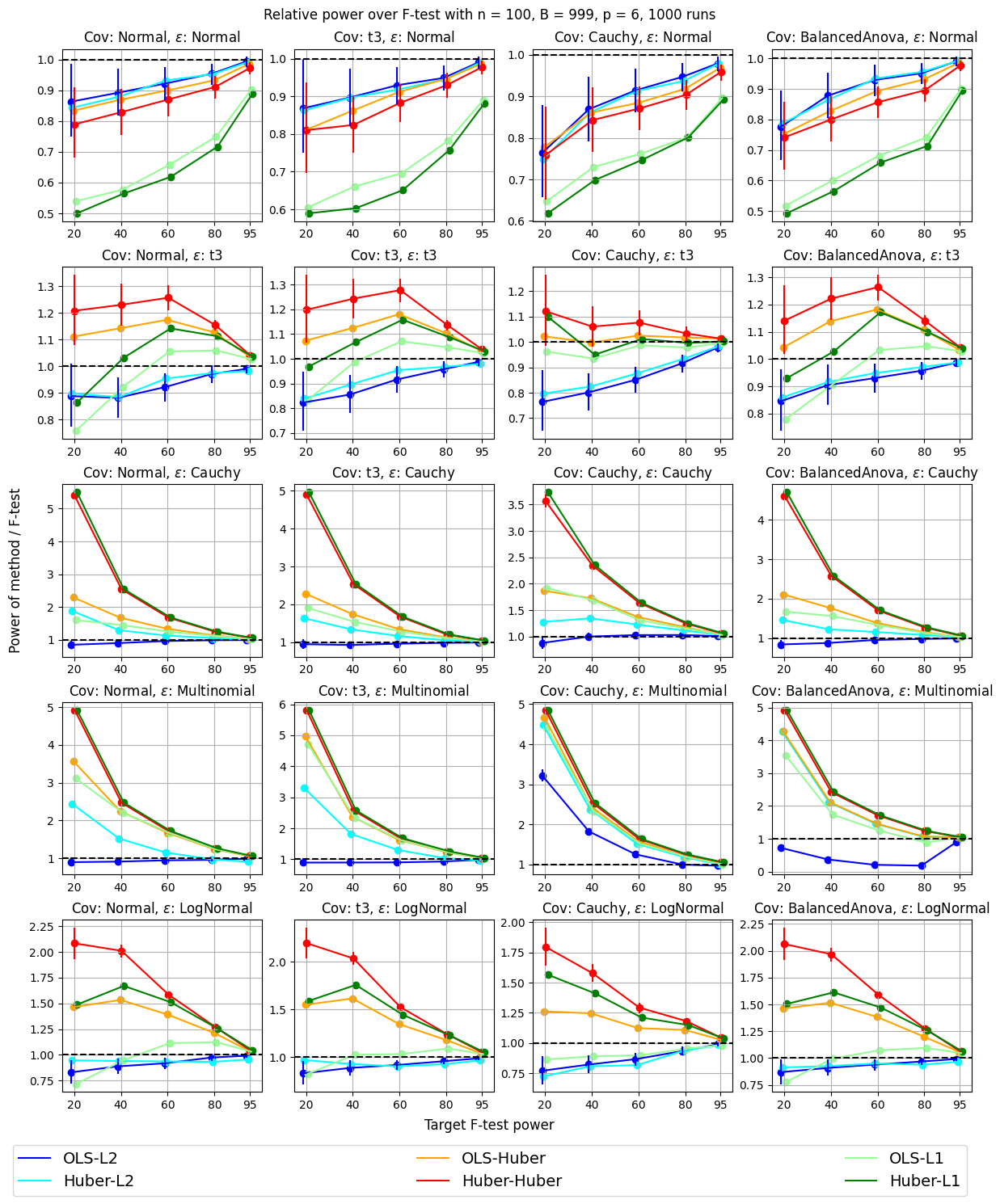}
    \vskip -0.25in
    \caption{
        Relative power for various RobustPALRMT regression approaches compared to the F-test, computed using $1000$ trials.  See section \ref{sec:Simulations}. Here we use $p = 6$ covariates in $Z$ and sample size of $n = 100$. Monte-Carlo $95\%$ error bars are plotted for the two methods of most interest, that is Huber-Huber RobustPALMRT and OLS-L2 PALMRT. The simulation is blocked, meaning for each setting and target F-test power, in a given replicate, the same data set is used for all methods. This leads to conservative error bars for the differences between the methods. 
    }
    \label{fig:n100-p6-rel-power}
\end{figure}

\begin{figure}[!htbp]
    \centering
    \includegraphics[width=0.85\linewidth]{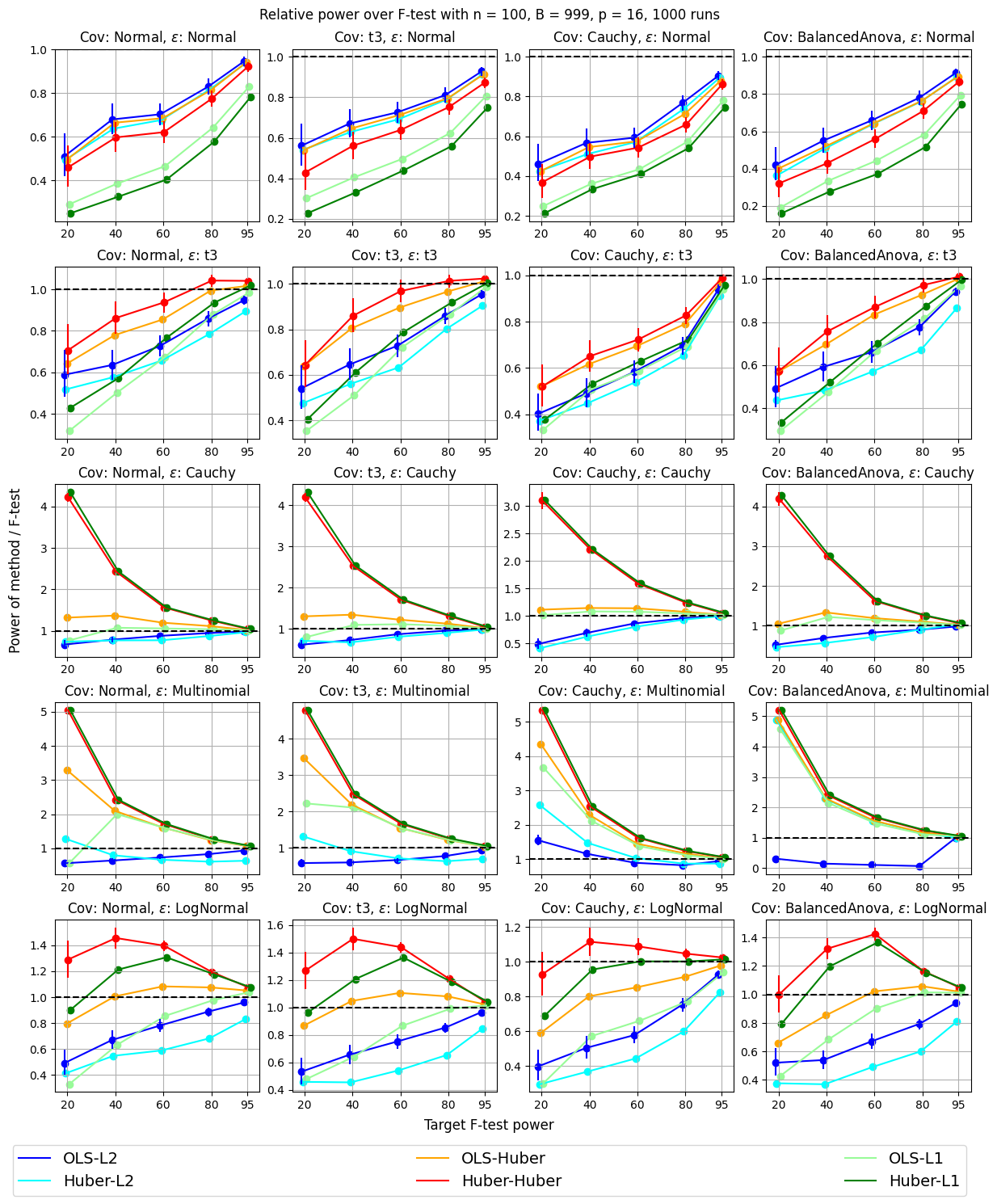}
    \vskip -0.25in
    \caption{
        Relative power for various RobustPALRMT regression approaches compared to the F-test, computed using $1000$ trials.  See section \ref{sec:Simulations}. Here we use $p = 16$ covariates in $Z$ and sample size of $n = 100$. Monte-Carlo $95\%$ error bars are plotted for the two methods of most interest, that is Huber-Huber RobustPALMRT and OLS-L2 PALMRT. The simulation is blocked, meaning for each setting and target F-test power, in a given replicate, the same data set is used for all methods. This leads to conservative error bars for the differences between the methods. 
    }
    \label{fig:n100-p16-rel-power}
\end{figure}


\begin{figure}[!htbp]
    \centering
    \includegraphics[width=0.85\linewidth]{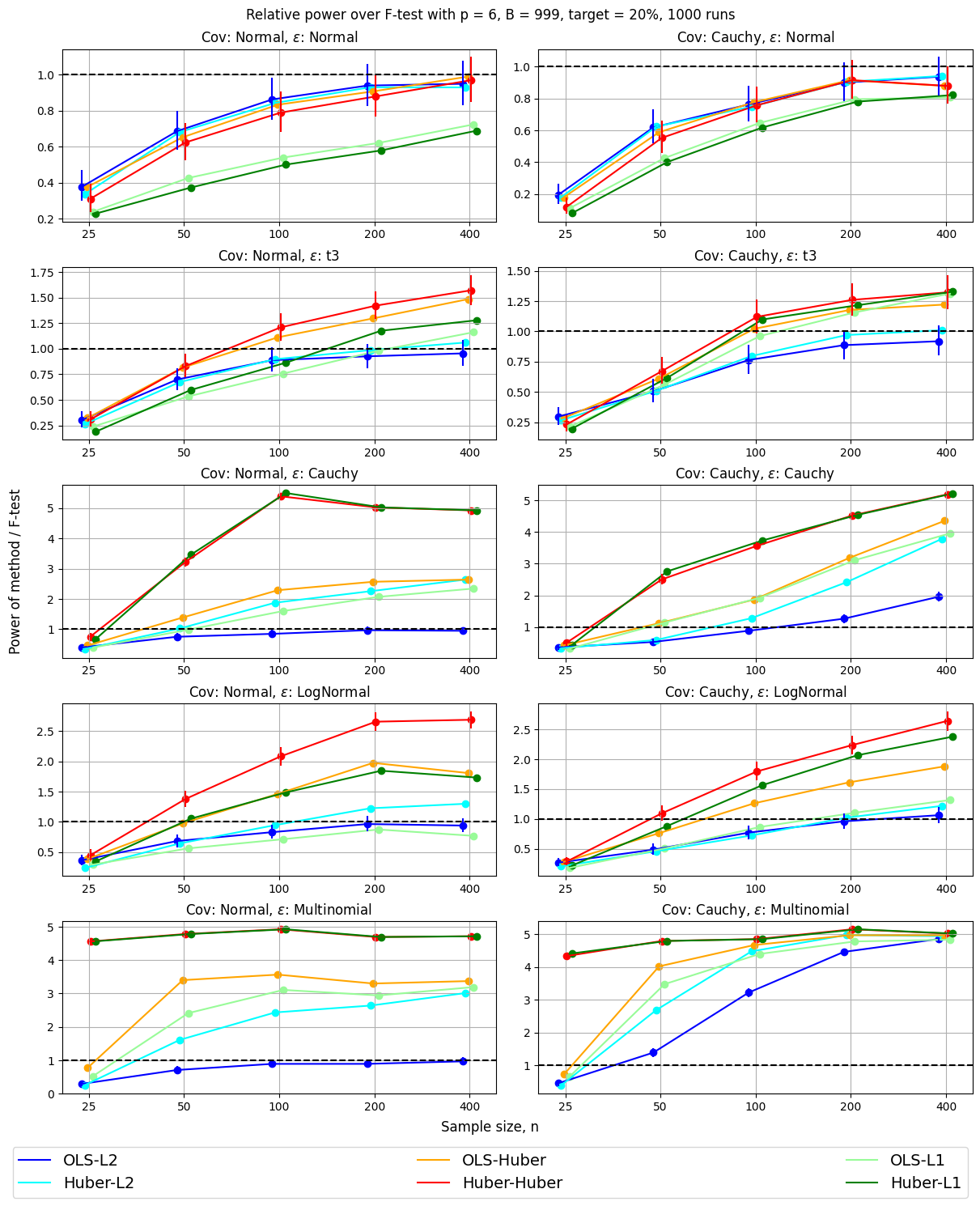}
    \vskip -0.25in
    \caption{
        Relative power of RobustPALMRT compared to the  F-test vs sample size $n$.
        In each setting, $\beta$ is selected to fix the power of the F-test at $20\%$.
        Note that the maximum possible ratio of powers is $5$.
        There are $1000$ replicates with $B = 999$.
        We show error bars for our method (Huber-Huber) and \citep{guan2023conformal} (OLS-L2), the two of most interest.
        To improve the visualization of overlapping curves and error bars, we  ``jittered'' curves horizontally, preserving the ratios and the shapes of the curves.
    }
    \label{fig:target20-rel-power}
\end{figure}

\begin{figure}[!htbp]
    \centering
    \includegraphics[width=0.88\linewidth]{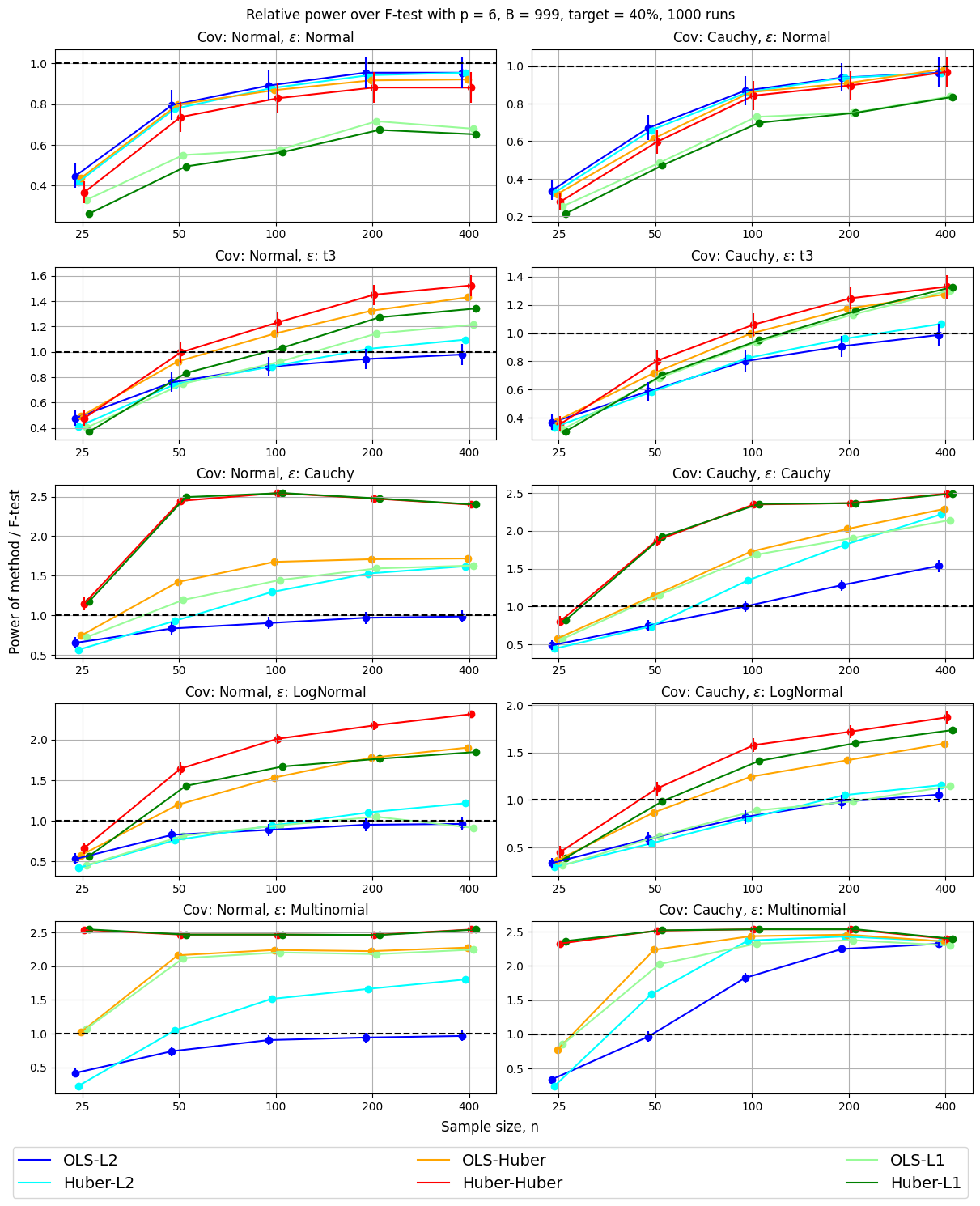}
    \vskip -0.25in
    \caption{
        Relative power of RobustPALMRT compared to the  F-test vs sample size $n$.
        In each setting, $\beta$ is selected to fix the power of the F-test at $40\%$.
        Note that the maximum possible ratio of powers is $2.5$.
        There are $1000$ replicates with $B = 999$.
        We show error bars for our method (Huber-Huber) and \citep{guan2023conformal} (OLS-L2), the two of most interest.
        To improve the visualization of overlapping curves and error bars, we  ``jittered'' curves horizontally, preserving the ratios and the shapes of the curves.
    }
    \label{fig:target40-rel-power}
\end{figure}

\begin{figure}[!htbp]
    \centering
    \includegraphics[width=0.88\linewidth]{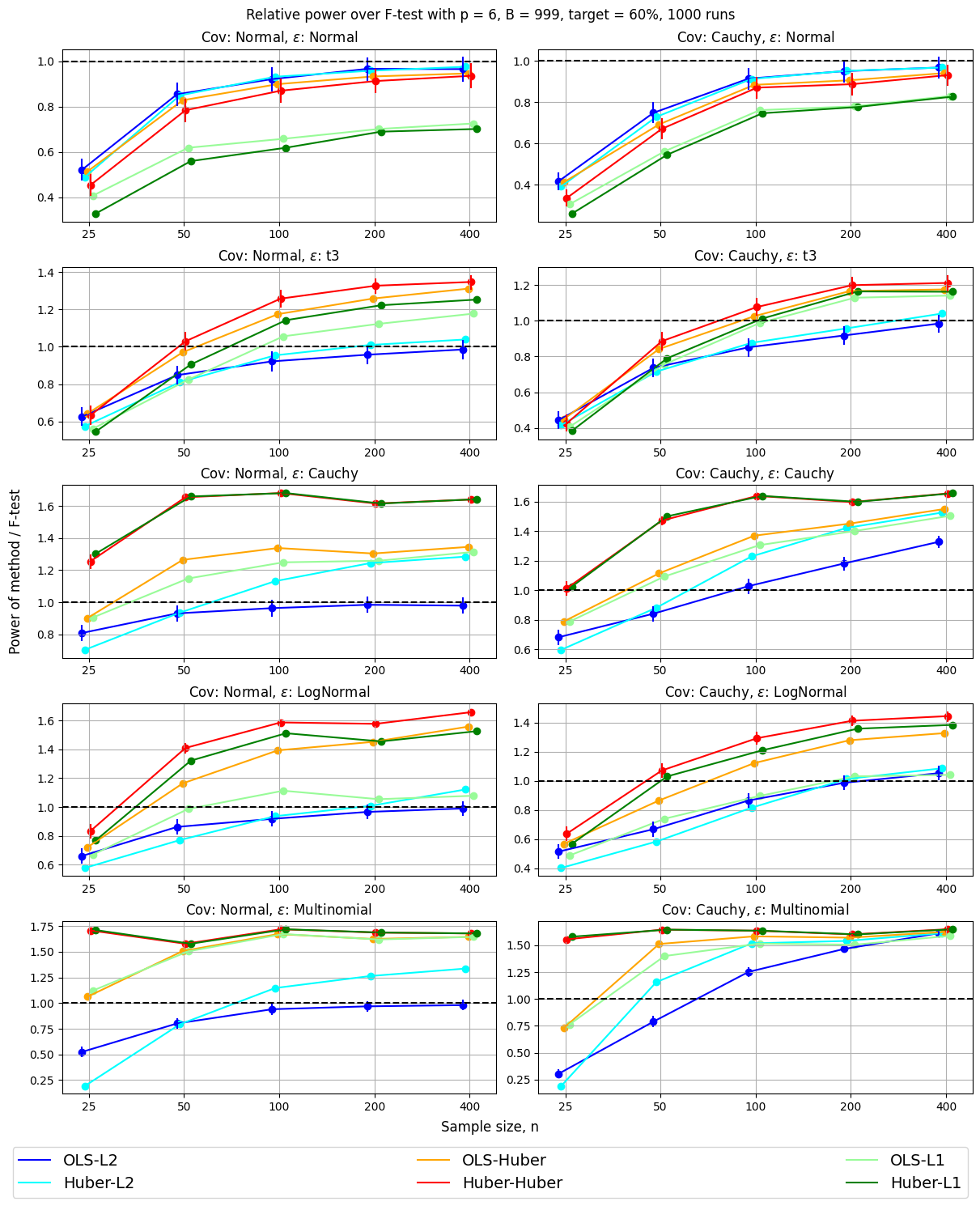}
    \vskip -0.25in
    \caption{
        Relative power of RobustPALMRT compared to the  F-test vs sample size $n$.
        In each setting, $\beta$ is selected to fix the power of the F-test at $60\%$.
        Note that the maximum possible ratio of powers is $1.67$.
        There are $1000$ replicates with $B = 999$.
        We show error bars for our method (Huber-Huber) and \citep{guan2023conformal} (OLS-L2), the two of most interest.
        To improve the visualization of overlapping curves and error bars, we  ``jittered'' curves horizontally, preserving the ratios and the shapes of the curves.
    }
    \label{fig:target60-rel-power}
\end{figure}

\begin{figure}[!htbp]
    \centering
    \includegraphics[width=0.88\linewidth]{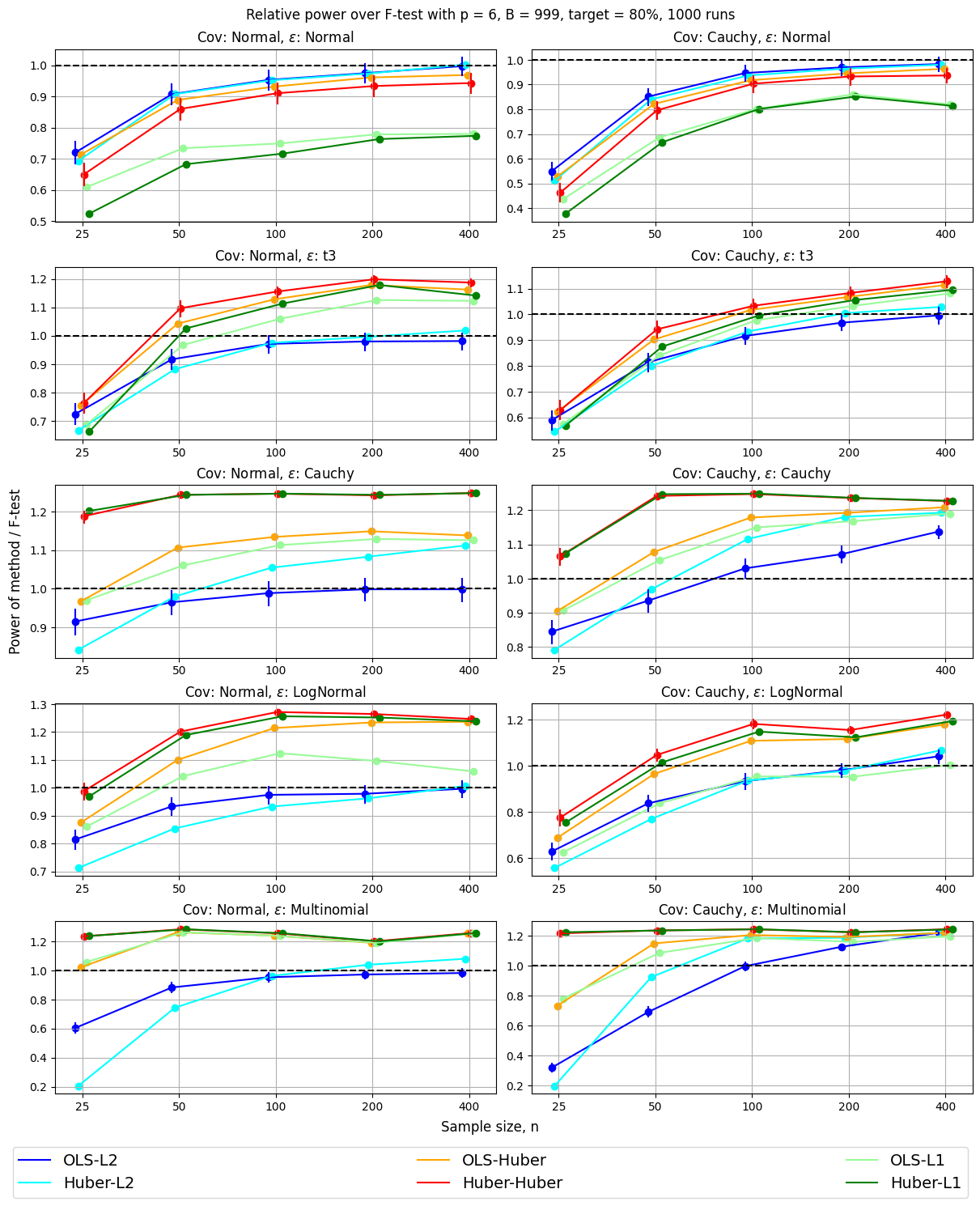}
    \vskip -0.25in
    \caption{
        Relative power of RobustPALMRT compared to the  F-test vs sample size $n$.
        In each setting, $\beta$ is selected to fix the power of the F-test at $80\%$.
        Note that the maximum possible ratio of powers is $1.25$.
        There are $1000$ replicates with $B = 999$.
        We show error bars for our method (Huber-Huber) and \citep{guan2023conformal} (OLS-L2), the two of most interest.
        To improve the visualization of overlapping curves and error bars, we  ``jittered'' curves horizontally, preserving the ratios and the shapes of the curves.
    }
    \label{fig:target80-rel-power}
\end{figure}

\begin{figure}[!htbp]
    \centering
    \includegraphics[width=0.88\linewidth]{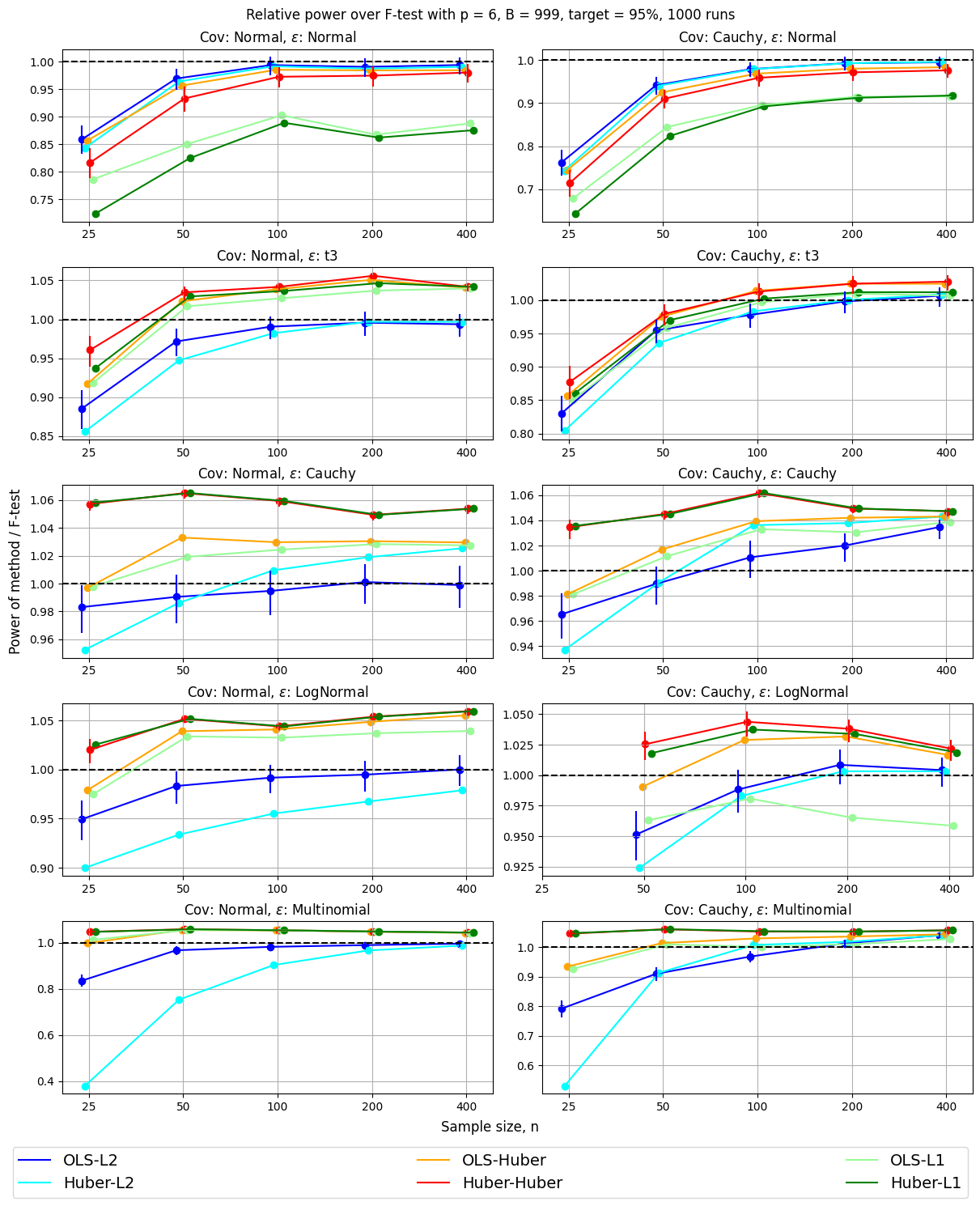}
    \vskip -0.25in
    \caption{
        Relative power of RobustPALMRT compared to the  F-test vs sample size $n$.
        In each setting, $\beta$ is selected to fix the power of the F-test at $95\%$.
        Note that the maximum possible ratio of powers is $1.05$.
        There are $1000$ replicates with $B = 999$.
        We show error bars for our method (Huber-Huber) and \citep{guan2023conformal} (OLS-L2), the two of most interest.
        To improve the visualization of overlapping curves and error bars, we  ``jittered'' curves horizontally, preserving the ratios and the shapes of the curves.
    }
    \label{fig:target95-rel-power}
\end{figure}


\begin{figure}[!htbp]
    \centering
    \includegraphics[width=0.95\linewidth]{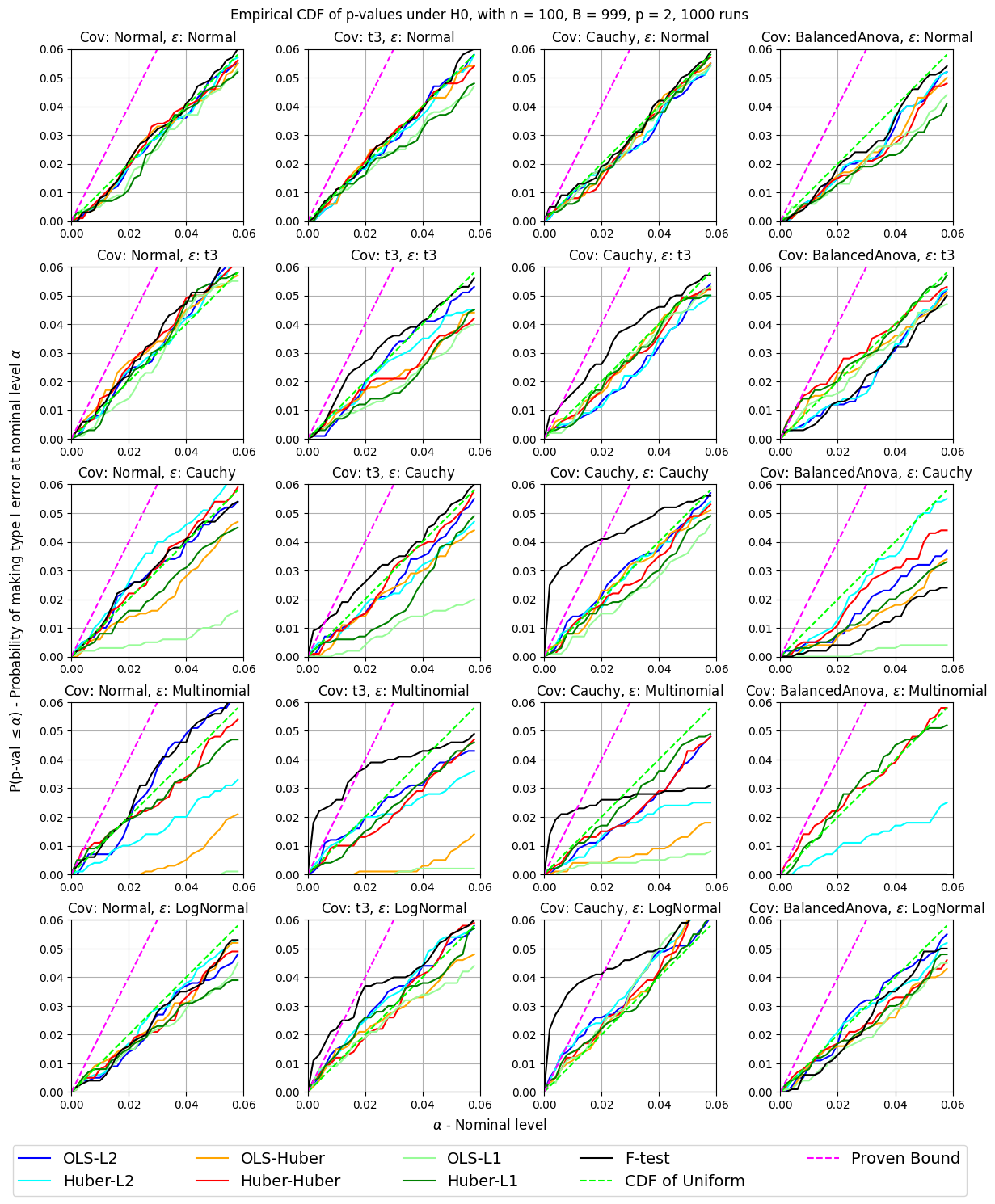}
    \vskip -0.25in
    \caption{
        Empirical CDF of RobustPALMRT and F-test p-values, with $\beta = 0$, $n = 100$, and $p = 2$.
        Ideally the empirical CDF (actual p-value) would match the Uniform CDF (nominal p-value; the green dashed line) as closely as possible.
        Our proofs ensure that the CDF of the RobustPALMRT methods lie below the $2\alpha$ line (the pink dashed line), but notice that empirically they fall at or below the Uniform CDF.
    }
    \label{fig:n100-p2-typeI-sim}
\end{figure}

\begin{figure}[!htbp]
    \centering
    \includegraphics[width=0.95\linewidth]{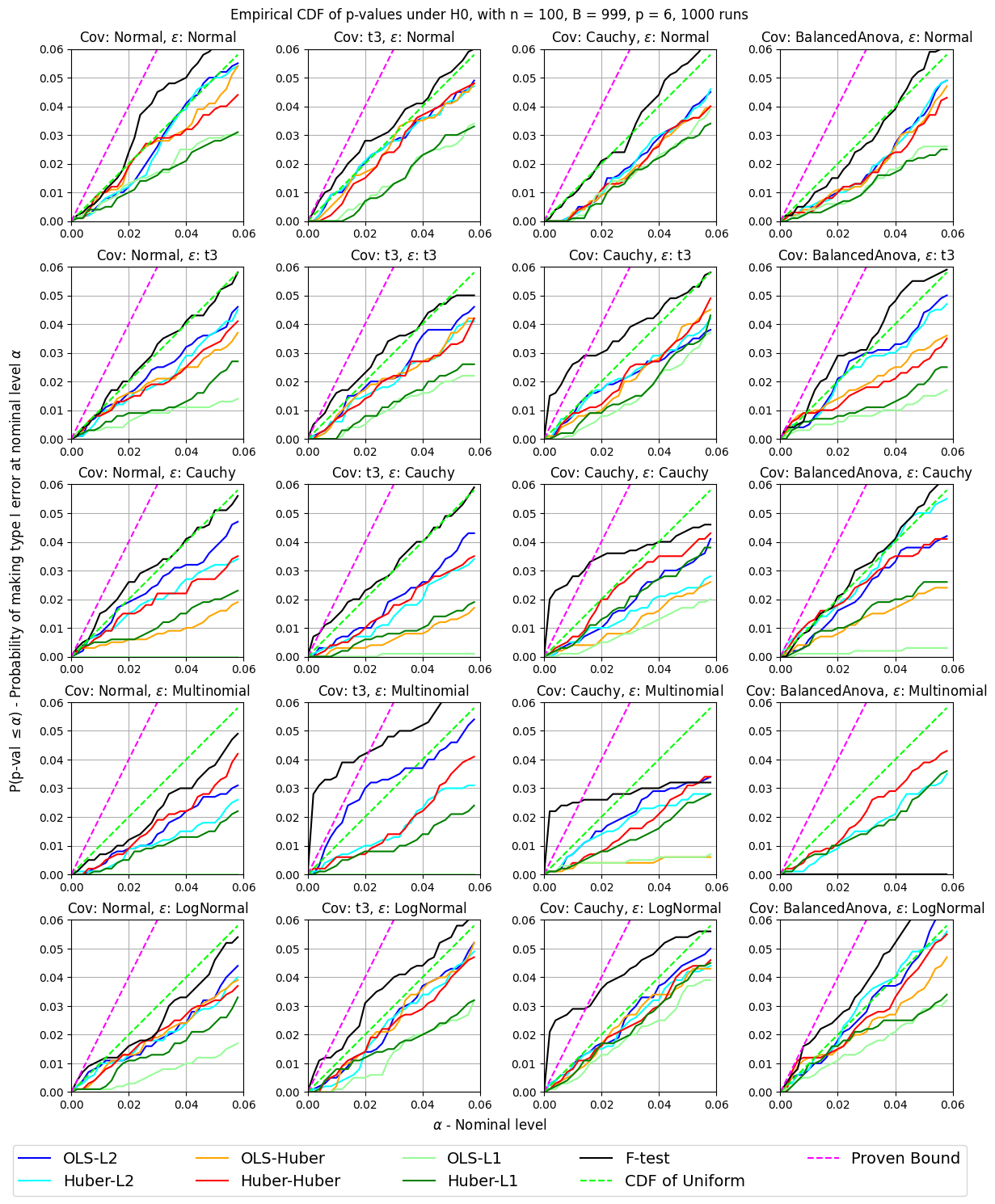}
    \vskip -0.25in
    \caption{
        Empirical CDF of RobustPALMRT and F-test p-values, with $\beta = 0$, $n = 100$, and $p = 6$.
        Ideally the empirical CDF (actual p-value) would match the Uniform CDF (nominal p-value; the green dashed line) as closely as possible.
        Our proofs ensure that the CDF of the RobustPALMRT methods lie below the $2\alpha$ line (the pink dashed line), but notice that empirically they fall at or below the Uniform CDF.
    }
    \label{fig:n100-p6-typeI-sim}
\end{figure}

\begin{figure}[!htbp]
    \centering
    \includegraphics[width=0.95\linewidth]{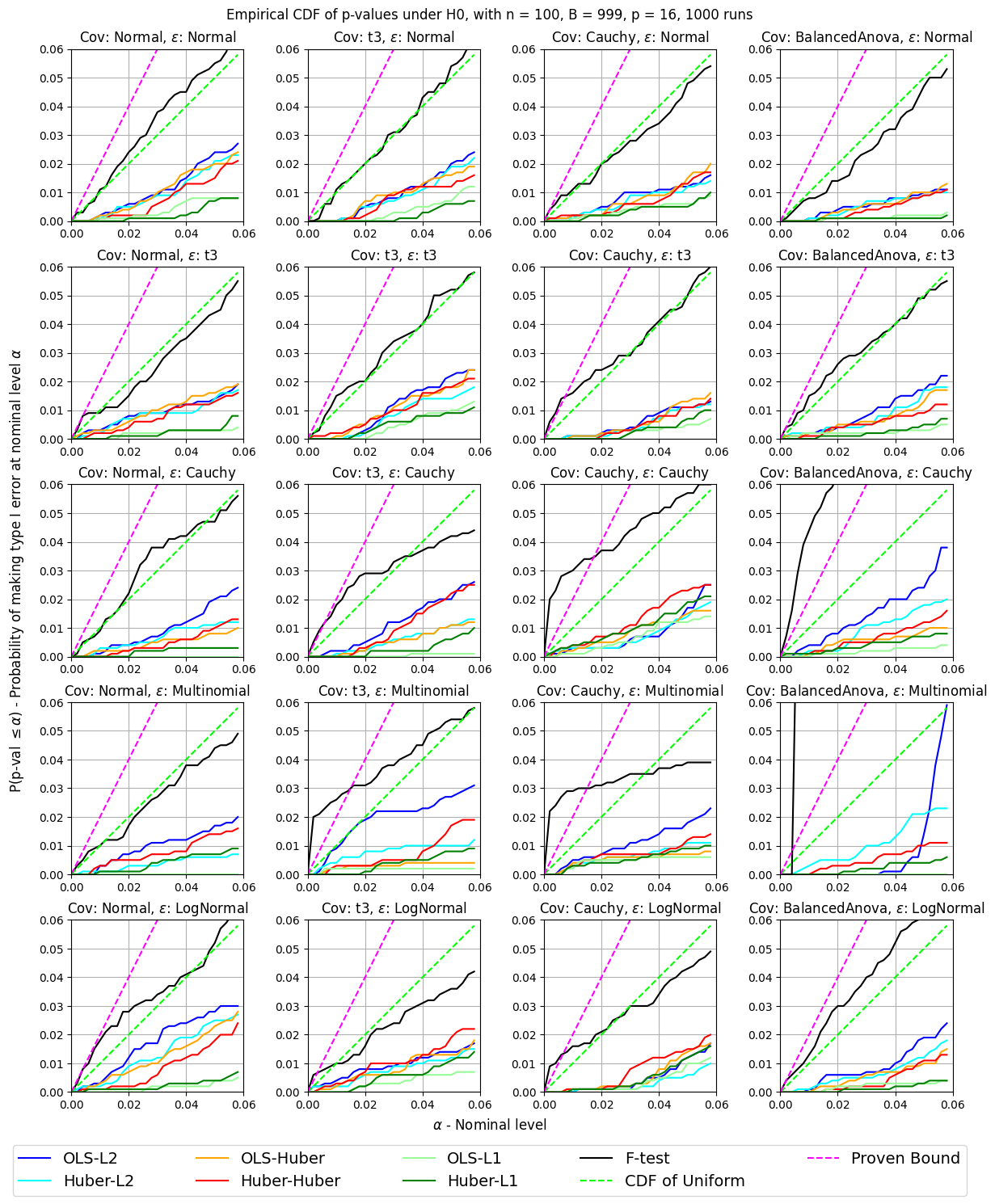}
    \vskip -0.25in
    \caption{
        Empirical CDF of RobustPALMRT and F-test p-values, with $\beta = 0$, $n = 100$, and $p = 16$.
        Ideally the empirical CDF (actual p-value) would match the Uniform CDF (nominal p-value; the green dashed line) as closely as possible.
        Our proofs ensure that the CDF of the RobustPALMRT methods lie below the $2\alpha$ line (the pink dashed line), but notice that empirically they fall at or below the Uniform CDF.
    }
    \label{fig:n100-p16-typeI-sim}
\end{figure}


\begin{figure}[!htbp]
    \centering
    \includegraphics[width=0.94\linewidth]{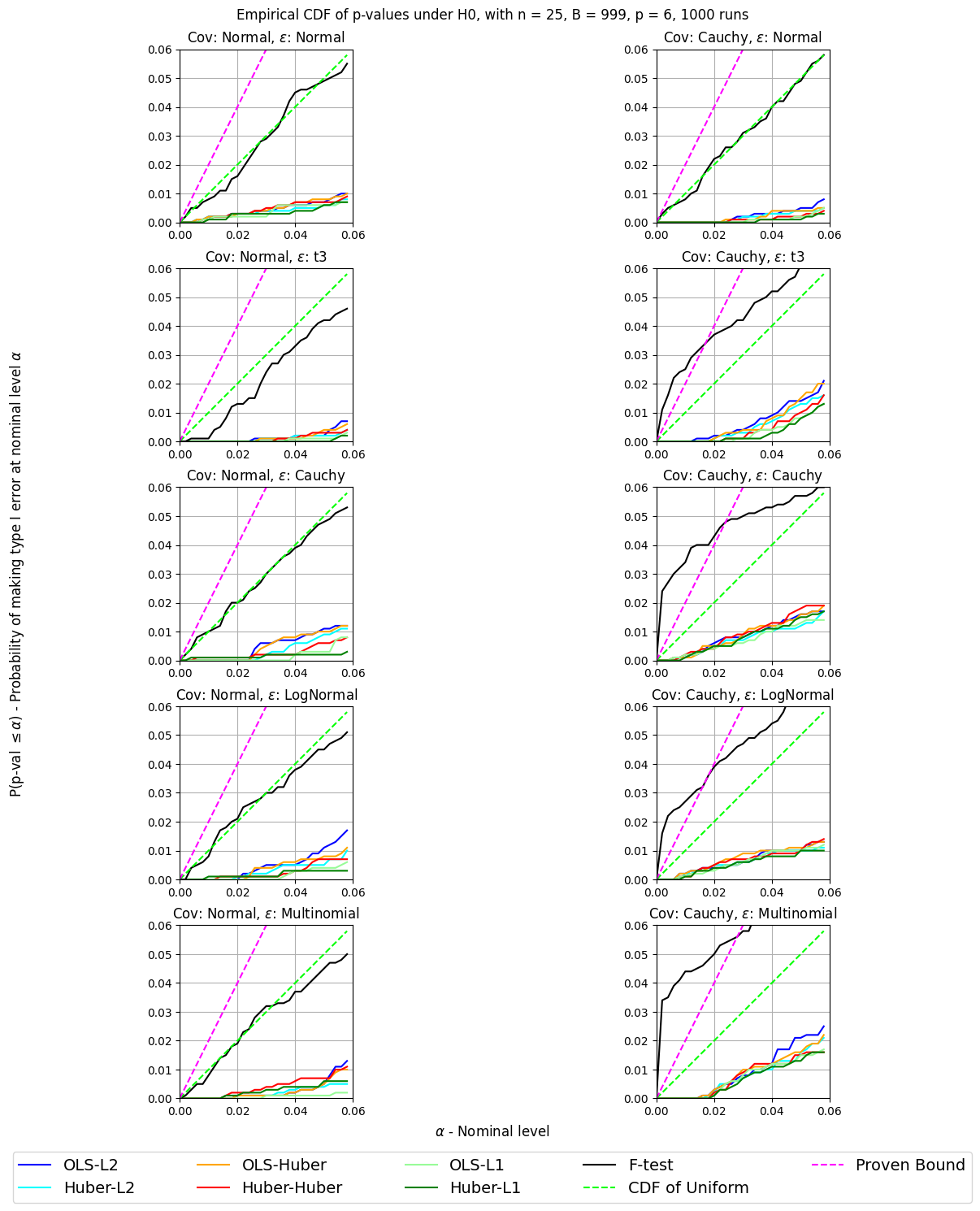}
    \vskip -0.25in
    \caption{
        Empirical CDF of RobustPALMRT and F-test p-values, with $\beta = 0$, $n = 25$, and $p = 6$.
        Ideally the empirical CDF would match the Uniform distribution CDF (the green dashed line) as closely as possible.
        We have proven that the CDF of the RobustPALMRT methods will lie below the $2\alpha$ line (the pink dashed line), but notice that empirically they fall at or below the Uniform distribution CDF.
    }
    \label{fig:n25-typeI-sim}
\end{figure}

\begin{figure}[!htbp]
    \centering
    \includegraphics[width=0.94\linewidth]{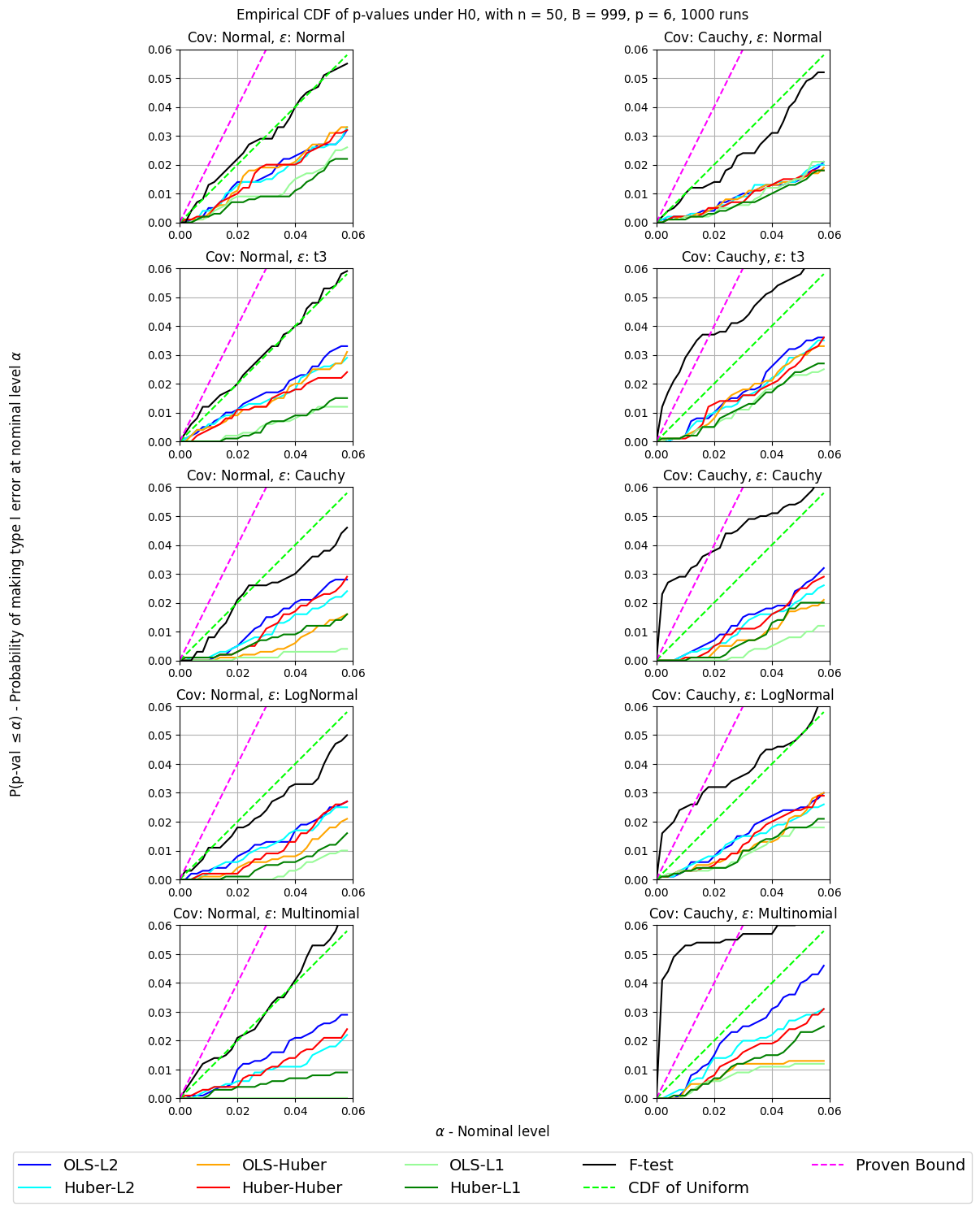}
    \vskip -0.25in
    \caption{
        Empirical CDF of RobustPALMRT and F-test p-values, with $\beta = 0$, $n = 50$, and $p = 6$.
        Ideally the empirical CDF (actual p-value) would match the Uniform CDF (nominal p-value; the green dashed line) as closely as possible.
        Our proofs ensure that the CDF of the RobustPALMRT methods lie below the $2\alpha$ line (the pink dashed line), but notice that empirically they fall at or below the Uniform CDF.
    }
    \label{fig:n50-typeI-sim}
\end{figure}

\begin{figure}[!htbp]
    \centering
    \includegraphics[width=0.94\linewidth]{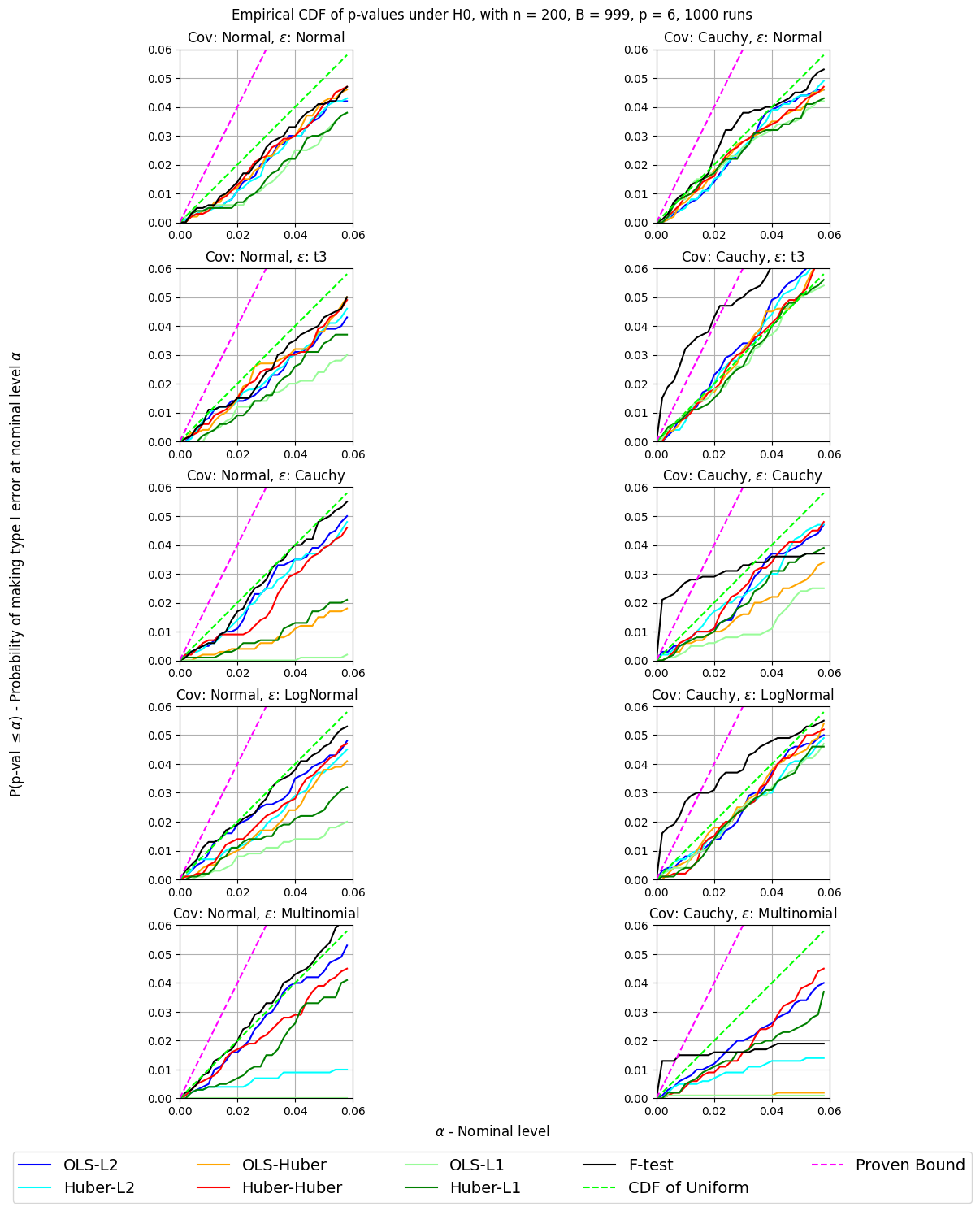}
    \vskip -0.25in
    \caption{
        Empirical CDF of RobustPALMRT and F-test p-values, with $\beta = 0$, $n = 200$, and $p = 6$.
        Ideally the empirical CDF (actual p-value) would match the Uniform CDF (nominal p-value; the green dashed line) as closely as possible.
        Our proofs ensure that the CDF of the RobustPALMRT methods lie below the $2\alpha$ line (the pink dashed line), but notice that empirically they fall at or below the Uniform CDF.
    }
    \label{fig:n200-typeI-sim}
\end{figure}

\begin{figure}[!htbp]
    \centering
    \includegraphics[width=0.94\linewidth]{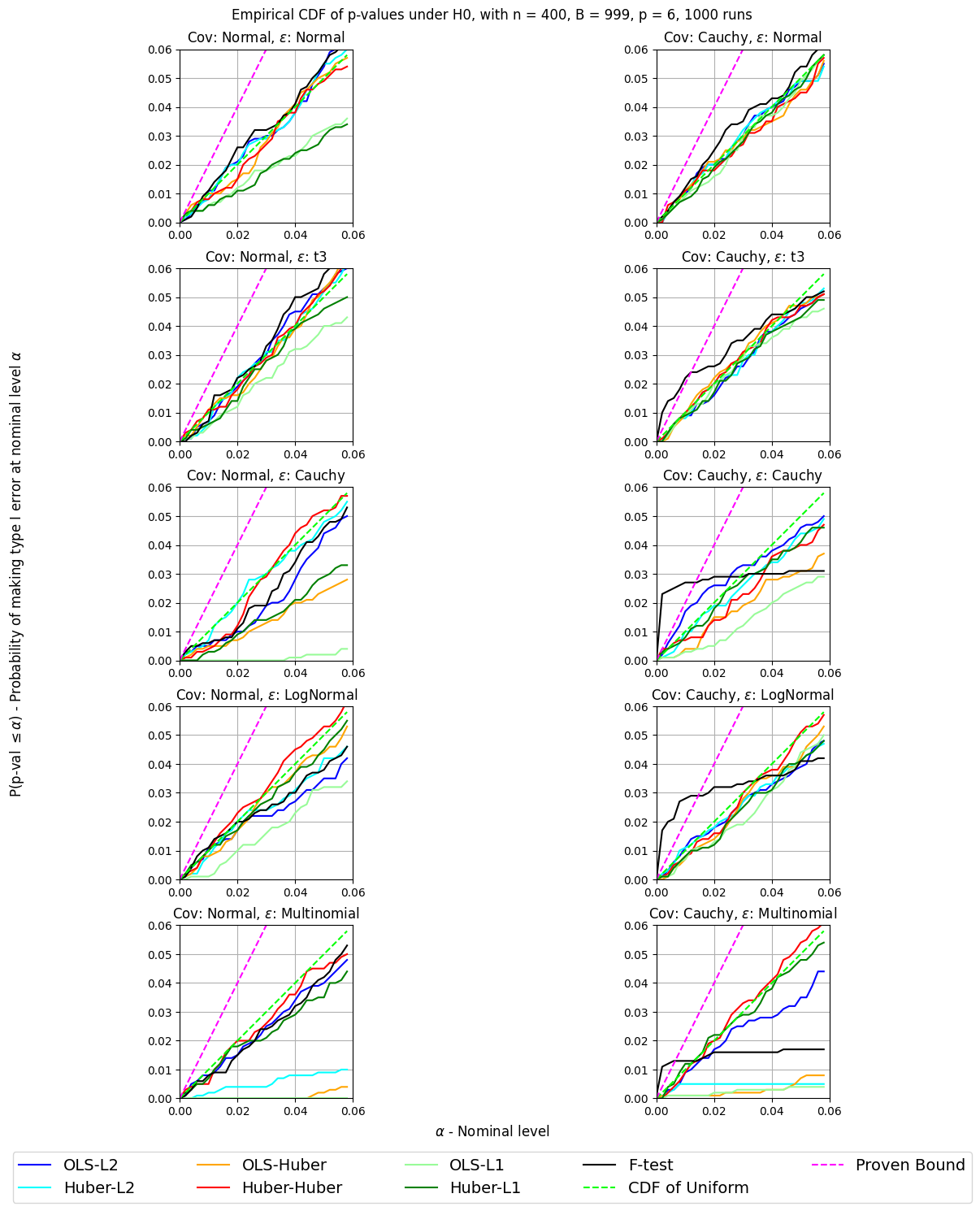}
    \vskip -0.25in
    \caption{
        Empirical CDF of RobustPALMRT and F-test p-values, with $\beta = 0$, $n = 400$, and $p = 6$.
        Ideally the empirical CDF (actual p-value) would match the Uniform CDF (nominal p-value; the green dashed line) as closely as possible.
        Our proofs ensure that the CDF of the RobustPALMRT methods lie below the $2\alpha$ line (the pink dashed line), but notice that empirically they fall at or below the Uniform CDF.
    }
    \label{fig:n400-typeI-sim}
\end{figure}

\end{document}